\newcommand{\mainalg}{{\sc SADDLES}}
\newcommand{\degest}{{\sc DEG}}
\newlength{\dhatheight}
\newtheorem{theorem}{Theorem}[section]
\newtheorem{claim}[theorem]{Claim}
\newcommand{\hN}{\widetilde{N}}
\newcommand{\hNd}{\widetilde{N}(d)}
\newcommand{\ignore}[1]{}
\DeclareRobustCommand*\cal{\@fontswitch\relax\mathcal}
\newcommand{\cD}{\mathcal{D}}
\newcommand{\bone}{\mathbbm{1}}
\newcommand{\eps}{\varepsilon}
\newcommand{\wgt}{\mathrm{wt}}
\newcommand{\EX}{\hbox{\bf E}}
\newcommand{\otilde}{\widetilde{O}}
\newcommand{\aN}{\widehat{N_\Lambda}}
\newcommand{\hindex}{h}
\newcommand{\findex}{z}
\newcommand{\err}{\delta} 
\newcommand{\thresh}{\tau}
\newcommand{\rep}{\ell}
\newcommand{\Sec}[1]{\hyperref[sec:#1]{\S\ref*{sec:#1}}} 
\newcommand{\Eqn}[1]{\hyperref[eqn:#1]{(\ref*{eqn:#1})}} 
\newcommand{\Fig}[1]{\hyperref[fig:#1]{Fig.\,\ref*{fig:#1}}} 
\newcommand{\Tab}[1]{\hyperref[tab:#1]{Tab.\,\ref*{tab:#1}}} 
\newcommand{\Thm}[1]{\hyperref[thm:#1]{Theorem\,\ref*{thm:#1}}} 
\newcommand{\Fact}[1]{\hyperref[fact:#1]{Fact\,\ref*{fact:#1}}} 
\newcommand{\Lem}[1]{\hyperref[lem:#1]{Lemma\,\ref*{lem:#1}}} 
\newcommand{\Prop}[1]{\hyperref[prop:#1]{Prop.~\ref*{prop:#1}}} 
\newcommand{\Prob}[1]{\hyperref[prob:#1]{Prolem~\ref*{prob:#1}}} 
\newcommand{\Cor}[1]{\hyperref[cor:#1]{Corollary~\ref*{cor:#1}}} 
\newcommand{\Conj}[1]{\hyperref[conj:#1]{Conjecture~\ref*{conj:#1}}} 
\newcommand{\Def}[1]{\hyperref[def:#1]{Definition~\ref*{def:#1}}} 
\newcommand{\Alg}[1]{\hyperref[alg:#1]{Alg.~\ref*{alg:#1}}} 
\newcommand{\Ex}[1]{\hyperref[ex:#1]{Ex.~\ref*{ex:#1}}} 
\newcommand{\Clm}[1]{\hyperref[clm:#1]{Claim~\ref*{clm:#1}}} 
\newcommand{\Obs}[1]{\hyperref[obs:#1]{Obs.~\ref*{obs:#1}}} 
\newcommand{\Step}[1]{\hyperref[step:#1]{Step~\ref*{step:#1}}} 
\begin{document} 


\makeatletter
\def\@copyrightspace{\relax}
\makeatother 

\pagestyle{plain}

\title{Provable and Practical Approximations for the Degree Distribution using
Sublinear Graph Samples} \titlenote{Both Talya Eden and Shweta Jain contributed equally 
to this work, and are joint first authors of this work.}

\author {Talya Eden} 
\affiliation{School of Computer Science, Tel Aviv University}  
\affiliation{Tel Aviv, Israel} 
\email{talyaa01@gmail.com}
\author {Shweta Jain} 
\affiliation{University of California, Santa Cruz}
\affiliation{Santa Cruz, CA, USA}
\email{sjain12@ucsc.edu}

\author {Ali Pinar} 
\affiliation{Sandia National Laboratories} 
\affiliation{Livermore, CA} 
\email{apinar@sandia.gov}

\author {Dana Ron} 
\affiliation{School of Computer Science, Tel Aviv University} 
\affiliation{Tel Aviv, Israel} 
\email{danaron@tau.ac.il}
\author {C. Seshadhri} 
\affiliation{University of California, Santa Cruz}
\affiliation{Santa Cruz, CA}
\email{sesh@ucsc.edu}

\begin{abstract} The degree distribution is one of the most fundamental properties used in the
analysis of massive graphs. There is a large literature on \emph{graph sampling},
where the goal is to estimate properties (especially the degree distribution) of a large graph through a small, random sample.
Estimating the degree distribution of real-world graphs poses a significant challenge, due to their heavy-tailed nature
and the large variance in degrees.

We design a new algorithm, \mainalg{}, for this problem, using recent mathematical techniques from the field of \emph{sublinear algorithms}. The \mainalg{} algorithm gives provably accurate outputs for all values of the degree distribution.  For the analysis, we define two fatness measures of the degree distribution, called the \emph{$h$-index} and the \emph{$z$-index}. We prove that \mainalg{} is sublinear in the graph size when these indices are large. A corollary of this result is a provably sublinear algorithm for any degree distribution bounded below by a power law.

We deploy our new algorithm on a variety of real datasets and demonstrate its excellent empirical behavior. In all instances, we get extremely accurate approximations for all values in the degree distribution by observing at most $1\%$ of the vertices.  This is a major improvement over the state-of-the-art sampling algorithms, which typically sample more than $10\%$ of the vertices to give comparable results.  We also observe that the $h$ and $z$-indices of real graphs are large, validating our theoretical analysis.

\end{abstract}

\maketitle

\section{Introduction} \label{sec:intro}
In domains as diverse as social sciences, biology, physics, cybersecurity, graphs are used to represent entities and the relationships
between them. This has led to the explosive growth of network science as a discipline over the past decade. One of
the hallmarks of network science is the occurrence of specific graph properties that are common to varying domains,
such as heavy tailed degree distributions, large clustering coefficients, and small-world behavior. Arguably, the most
significant among these properties is the degree distribution, whose study led to the foundation of network science~\cite{BarabasiAlbert99,BrKu+00,FFF99}.

Given an undirected graph $G$, the degree distribution (or technically, histogram) is the sequence
of numbers $n(1), n(2), \ldots$, where $n(d)$ is the number of vertices of degree $d$. In almost all
real-world scenarios, the average degree is small, but the variance (and higher moments)
is large. Even for relatively large $d$, $n(d)$ is still non-zero, and $n(d)$ typically has 
a smooth non-increasing behavior. In \Fig{intro}, we see the typical degree distribution behavior.
The average degree in a Google web network is less than 10, but the maximum degree is more than 5000.
There are also numerous vertices with all intermediate degrees.
This is referred to as
a ``heavy tailed" distribution. The degree distribution, especially the tail,
is of significant relevance to modeling networks, determining their resilience, spread
of information, and for algorithmics~\cite{CoEr+00,NeStWa01,PeFlLa+02,Ne03,Mi03,ChFa06,SeKoPi11,DuKoPi13,AkKoPi17}. 

With full access to $G$, the degree distribution can be computed in linear time, by simply
determining the degree of each vertex. Yet in many scenarios, we only have \emph{partial}
access to the graph, provided through some graph samples. A naive extrapolation of the degree
distribution can result in biased results. 
The seminal research paper of Faloutsos et al. claimed a power law in the degree
distribution on the Internet~\cite{FFF99}. This degree distribution was deduced by measuring a power law distribution 
in the graph sample generated by a collection of traceroute queries on a set of routers.
%
Unfortunately, it was mathematically and empirically proven that
traceroute responses can have a power law \emph{even if the true network does not}~\cite{LaBy+03,PeRi04,ClMo05,AcClKe+09}. 
In general, a direct extrapolation of the degree distribution from a graph subsample
is not valid for the underlying graph.
This leads to the primary question behind our work.

\emph{How can we provably
and practically 
estimate the degree distribution without seeing the entire
graph?}

\begin{figure*}[h!]
\centering
\subfigure[{\tt amazon0601} copurchase network]{\includegraphics[width=0.23\textwidth]{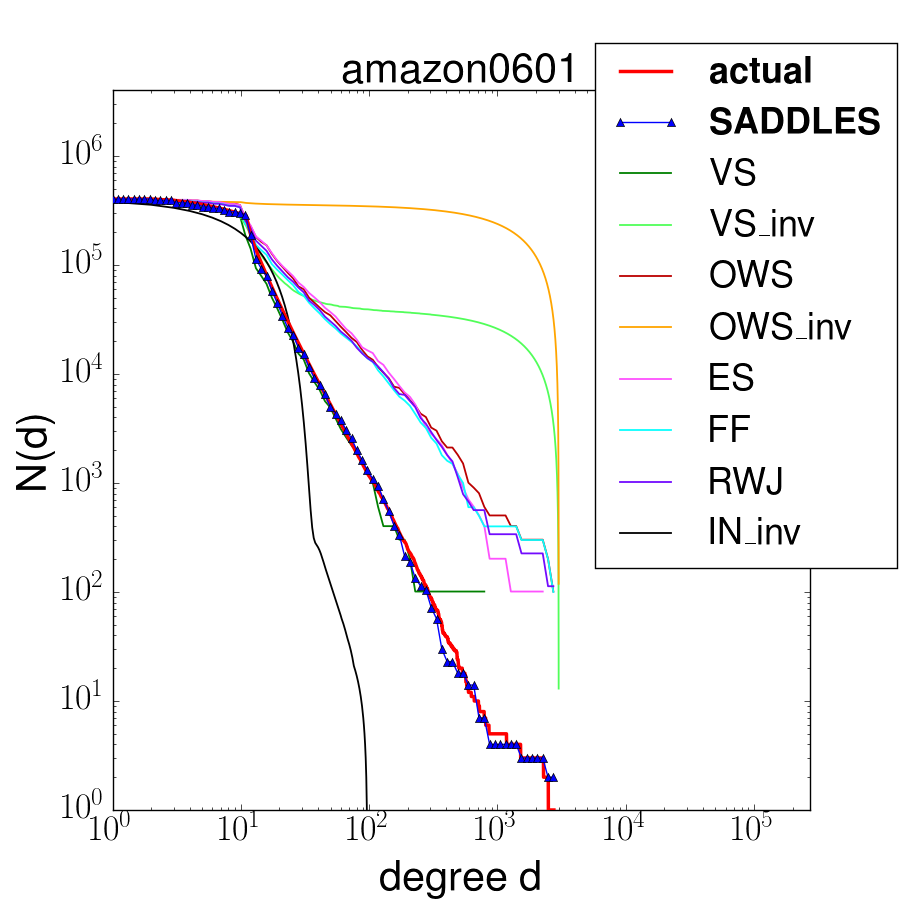}
    \label{fig:intro-amazon}}
\subfigure[{\tt web-Google} web network]{\includegraphics[width=0.23\textwidth]{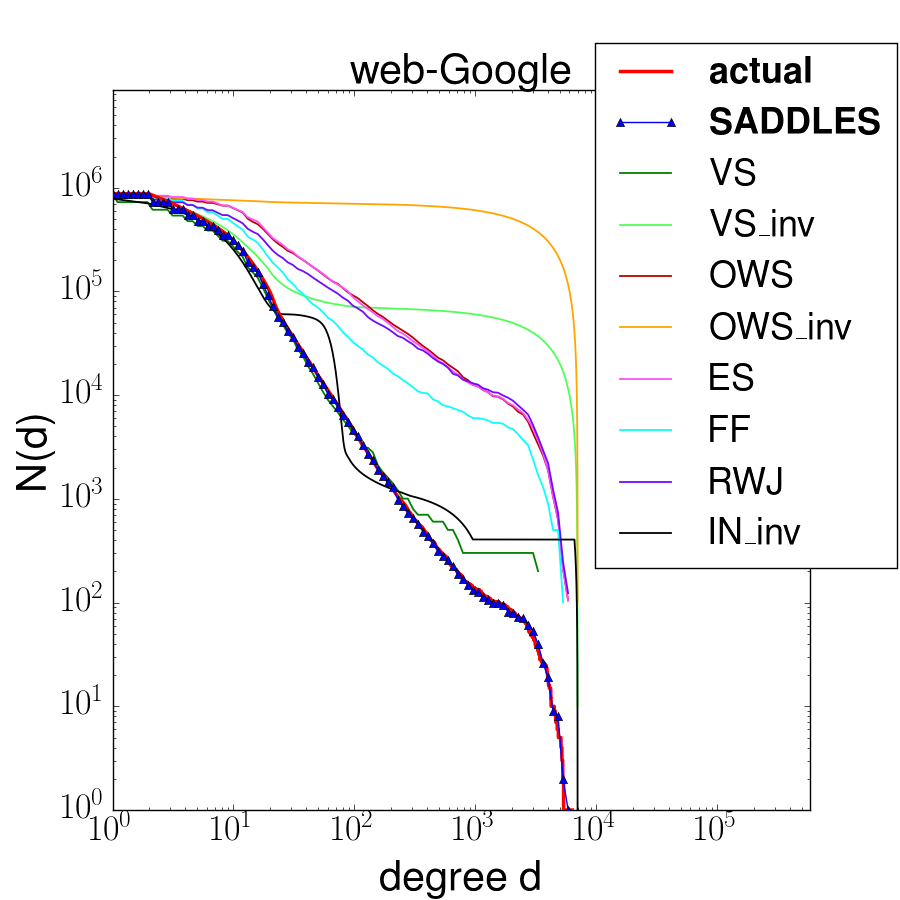}
    \label{fig:intro-google}}
\subfigure[{\tt cit-Patents} citation network]{\includegraphics[width=0.23\textwidth]{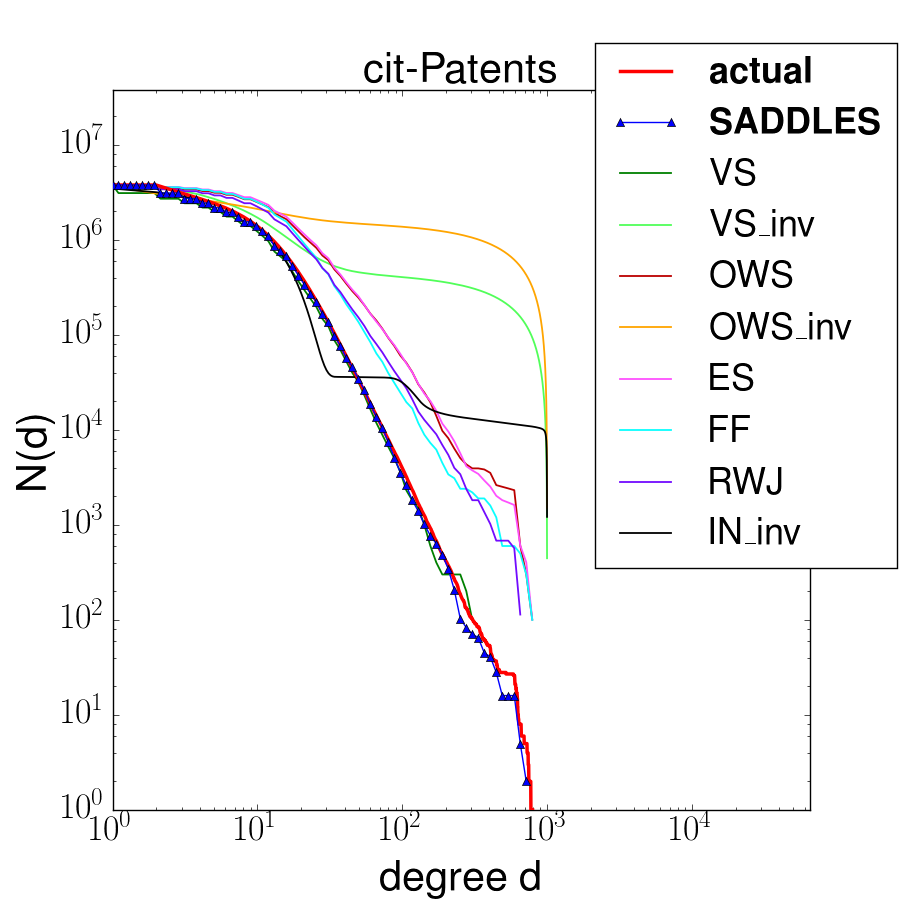}
    \label{fig:intro-patents}}
\subfigure[{\tt com-orkut} social network]{\includegraphics[width=0.23\textwidth]{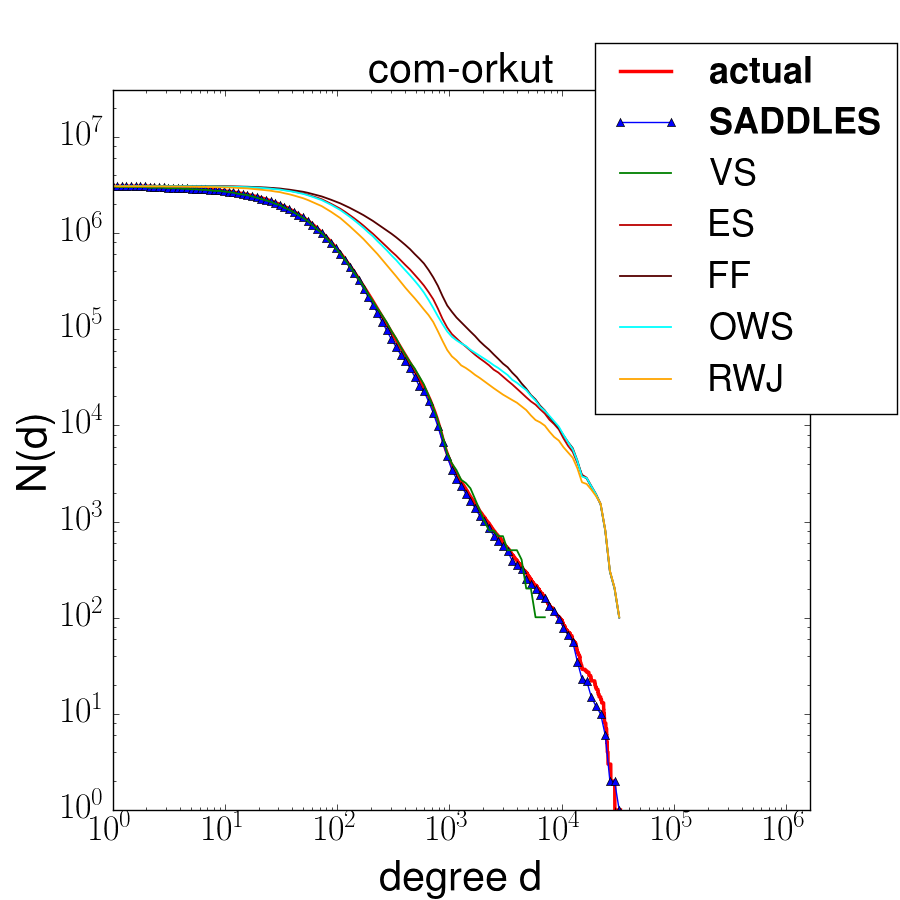}
    \label{fig:intro-orkut}}
\caption{The output of \mainalg{} on a collection of networks: {\tt amazon0601} (403K vertices, 4.9M edges), {\tt web-Google} (870K vertices, 4.3M edges), 
{\tt cit-Patents} (3.8M vertices, 16M edges), {\tt com-orkut} social network (3M vertices, 117M edges). \mainalg{}
samples $1\%$ of the vertices and gives accurate results for the entire (cumulative) degree distribution. For comparison,
we show the output of a number of sampling algorithms from past work, each run with the same number of samples. (Because of the size of {\tt com-Orkut},
methods involving optimization~\cite{ZKS15} fail to produce an estimate in reasonable time.)
}
\label{fig:intro}
\end{figure*}

There is a rich literature in statistics, data mining, and physics
on estimating graph properties (especially the degree distribution) using a small subsample~\cite{SW05, LF06, LKJ06, EHR08, ANK10, MB11, RT12, ahmed12socialnets, ANK14, ZKS15}.
Nonetheless, there is no provable algorithm for the {entire} degree distribution, with a formal analysis 
on when it is sublinear in the number of vertices. Furthermore, most empirical studies typically sample 10-30\% of the vertices
for reasonable estimates. 

\subsection{Problem description} \label{sec:problem}

We focus on
the \emph{complementary cumulative degree histogram}
(often called the cumulative degree distribution) or \emph{ccdh} of $G$.
This is the sequence $\{N(d)\}$,
where $N(d) = \sum_{r \geq d} n(r)$ is the number of vertices of degree at least $d$. The ccdh is typically used for fitting
distributions, since it averages out noise and is monotonic~\cite{ClShNe09}.
Our aim is to get an accurate bicriteria approximation to the ccdh of $G$, at all values
of $d$. 

\begin{definition} \label{def:approx} The sequence $\{\hN(d)\}$ is an $(\eps,\eps)$-estimate
of the ccdh if $\forall d$, $(1-\eps)N((1+\eps)d) \leq \hN(d) \leq (1+\eps)N((1-\eps)d)$.
\end{definition}

Computing an $(\eps,\eps)$-estimate is significantly harder than approximating
the ccdh using standard distribution measures. Statistical measures, such as the
KS-distance, $\chi^2$, $\ell_p$-norms, etc. tend to ignore the tail, since (in terms
of probability mass) it is a negligible portion of the distribution.
An $(\eps,\eps)$-estimate is accurate
for all $d$.

\textbf{The query model:} A formal approach requires specifying a \emph{query model} for accessing $G$.
We look to the subfields of property testing and sublinear algorithms within theoretical
computer science for such models~\cite{GR-bound,GR08}.
Consider the following three kinds of \emph{queries}.
\begin{asparaitem}
    \item Vertex queries: acquire a uniform random vertex $v \in V$.
    \item Neighbor queries: given $v \in V$, acquire  a uniform random neighbor $u$ of $V$.
    \item Degree queries: given $v \in V$, acquire  the degree $d_v$.
\end{asparaitem}
An algorithm is only allowed to make these queries to process the input.
It has to make some number of queries, and finally produce an output.
We discuss two query models, and give results for both.

\medskip

\begin{asparaitem}
    \item [\bf The Standard Model (SM)] All queries allowed: This is the standard model in numerous sublinear algorithms results~\cite{GR-bound,GR08,GRS11,ELRS15,ERS17}.
    Furthermore, most papers on graph sampling implicitly use this model for generating subsamples. Indeed, any method
    involving crawling from a random set of vertices and collecting degrees is in the SM. This model
    is the primary setting for our work, and allows for comparison with rich body of graph sampling algorithms.
    It is worth noting that in the SM, one can determine the entire degree
distribution in $O(n\log n)$ queries (the extra $\log n$ factor comes from the coupon collector bound
of finding all the vertices through uniform sampling). Thus, it makes sense to express the number of queries made by an algorithm as a fraction
of $n$. Alternately, the number of queries is basically the number of vertices encountered by the algorithm.
Thus, a sublinear algorithm makes $o(n)$ queries.
    
    \item [\bf The Hidden Degrees Model (HDM)] Vertex and neighbor queries allowed, not degree queries: This is a substantially weaker model. In numerous cybersecurity and network monitoring settings, an algorithm cannot query for degrees, and has to infer them
    indirectly. Observe that this model is significantly harder than the SM. It takes $O((m+n)\log n)$ to determine all the degrees, since one has to at least 
visit all the edges to find degrees exactly. In this model, we express the number of queries as a fraction
of $m$. 

\end{asparaitem}

\paragraph{Regarding uniform random vertex queries:} This is a fairly powerful query, that may not be
realizable in all situations. Indeed, Chierichetti et al. explicitly study this problem in social networks
and design (non-trivial) algorithms for sampling uniform random vertices~\cite{ChDa+16}. In a previous work,
Dasgupta, Kumar, and Sarlos study algorithms for estimating average degree when only random walks are possible~\cite{DaKu14}.
Despite this power, we believe
that SM is a good testbed for understanding \emph{when} a small sample of a graph provably gives properties
of the whole. Furthermore, in the context of graph sampling, access to uniform 
random vertices is commonly (implicitly) assumed~\cite{RT12,LKJ06,EHR08,LF06,PSE15,ZKS15,ANK14}.
The vast majority of experiments conducted often use uniform random vertices.

As a future direction, we believe it is important to investigate sampling models without random vertex queries.

\subsection{Our contributions} \label{sec:contribute}

Our main theoretical result is a new sampling algorithm, the 
Sublinear Approximations
for Degree Distributions Leveraging Edge Samples, or \mainalg.
This algorithm provably provides $(\eps,\eps)$-approximations
for the ccdh. We show how to design \mainalg{} under both the SM
and the HDM.
We apply \mainalg{} on a variety of real datasets and demonstrate its ability to accurately
approximate the ccdh with a tiny sample of the graph.
\begin{asparaitem}
    \item \textbf{Sampling algorithm for estimating ccdh:} Our algorithm combines a number of techniques in random sampling
    to get $(\eps,\eps)$-estimates for the ccdh. A crucial component is an application of 
    an edge simulation technique, first devised by Eden et al. in the context of triangle counting~\cite{ELRS15,ERS17}.
    This (theoretical) technique shows how to get a collection of weakly correlated uniform random edges from
    independent uniform vertices. \mainalg{} employs a weighting scheme on top of this method to estimate the ccdh.
    \item \textbf{Heavy tails leads to sublinear algorithms:} The challenge in analyzing \mainalg{} is
    in finding parameters of the ccdh that allow for sublinear query complexity.
    To that end, we discuss
    two parameters that measure ``heaviness"
    of the distribution tail: the classic $\hindex$-index and a newly defined $\findex$-index. We prove
    that the query complexity of \mainalg{} is sublinear (for both models) whenever these indices are large.
    \item \textbf{Excellent empirical behavior:} We deploy an implementation of \mainalg{} on a collection
    of large real-world graphs. In all instances, we achieve extremely accurate estimates for the entire ccdh
    by sampling at most $1\%$ of the vertices of the graph.
    Refer to \Fig{intro}. Observe how \mainalg{} tracks various jumps in the ccdh, for all graphs
    in \Fig{intro}.
    \item \textbf{Comparison with existing sampling methods:} A number of graph sampling methods
    have been proposed in practice, such as vertex sampling (VS), snowball sampling (OWS), forest-fire sampling (FF), induced graph sampling (IN), random walk (RWJ), edge sampling (ES)
    ~\cite{RT12,LKJ06,EHR08,LF06,PSE15,ZKS15,ANK14}. A recent work of Zhang et al. explicitly addresses biases in these sampling methods,
    and fixes them using optimization techniques~\cite{ZKS15}. We run head-to-head comparisons with all these sampling methods,
    and demonstrate the \mainalg{} gives significantly better practical performance. \Fig{intro} shows the output
    of all these sampling methods with a total sample size of $1\%$ of the vertices. Observe how across the board,
    the methods make erroneous estimates for most of the degree distribution. The errors are also very large, for all
    the methods. This is consistent with previous work, where methods sample more than 10\% of the number of vertices.
\end{asparaitem}

\subsection{Theoretical results in detail} \label{sec:theoretical}

Our main theoretical result is a new sampling algorithm, the Sublinear Approximations
for Degree Distributions Leveraging Edge Samples, or \mainalg.

We first demonstrate our results for power law degree distributions~\cite{BarabasiAlbert99,BrKu+00,FFF99}.
Statistical fitting procedures suggest they occur to some extent in the real-world, albeit with much noise~\cite{ClShNe09}.
The classic power law degree distribution sets $n(d) \propto 1/d^\gamma$, where $\gamma$ is typically in $[2,3]$.
We build on this to define a power law lower bound.

\begin{definition} \label{def:powerlaw} Fix $\gamma > 2$. A degree distribution
is bounded below by a power law with exponent $\gamma$, if the ccdh satisfies
the following property. There exists a constant $\tau > 0$ such
that for all $d$, $N(d) \geq \lfloor \tau n/d^{\gamma-1} \rfloor$.
\end{definition}

The following is a corollary of our main result.
For convenience, we will suppress query complexity dependencies on $\eps$ and $\log n$ factors,
using $\otilde(\cdot)$.

\begin{theorem} \label{thm:power} 
Suppose the degree distribution of $G$
is bounded below by a power law with exponent $\gamma$. Let the average degree be denoted by $\overline{d}$.
For any $\eps > 0$, the \mainalg{} algorithm outputs (with high probability) an $(\eps,\eps)$-approximation to the ccdh
and makes the following number of queries.
\begin{asparaitem}
    \item SM: $\otilde(n^{1-\frac{1}{\gamma}} + n^{1-\frac{1}{\gamma-1}} \overline{d})$
    \item HDM: $\otilde(n^{1-\frac{1}{2(\gamma-1)}} \overline{d})$
\end{asparaitem}
\end{theorem}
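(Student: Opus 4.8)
The plan is to derive \Thm{power} as a direct specialization of the general analysis of \mainalg{}, which (I anticipate) bounds the query complexity as a \emph{decreasing} function of the two fatness measures, the $h$-index and the $z$-index, while the $(\eps,\eps)$-accuracy guarantee is inherited verbatim from that general theorem. Since larger indices only lower the query count, the whole argument reduces to establishing good \emph{lower bounds} on $h$ and $z$ from the single hypothesis $N(d) \ge \lfloor \tau n/d^{\gamma-1}\rfloor$ of \Def{powerlaw}, and then substituting these bounds into the general complexity expression together with $m = \Theta(n\overline{d})$.

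First I would lower-bound the $h$-index, which is the largest $d$ for which at least $d$ vertices have degree at least $d$, i.e.\ $N(d) \ge d$. Dropping the floor and inserting the power-law bound, it suffices to locate the crossover $\tau n/d^{\gamma-1} = d$, i.e.\ $d^{\gamma} = \tau n$, giving $h = \Omega(n^{1/\gamma})$. Because the vertex-sampling (head) component of \mainalg{} costs on the order of $n/h$ queries, this immediately produces the first SM term $\otilde(n^{1-1/\gamma})$.

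Next I would handle the $z$-index, the edge/degree-weighted analogue of $h$, which governs the cost of the edge-sampling (tail) component. The hypothesis only bounds $N$ from below, so the crucial step is to transfer it to the degree-weighted cumulative sum $M(d) \eqdef \sum_{r\ge d} r\,n(r)$. By Abel summation one gets the clean identity $M(d) = d\,N(d) + \sum_{r>d} N(r) \ge \sum_{r\ge d} N(r) \ge \tau n \sum_{r\ge d} r^{-(\gamma-1)}$; since $\gamma > 2$ the tail sum is $\Theta(d^{-(\gamma-2)})$, whence $M(d) = \Omega(n/d^{\gamma-2})$. Taking the corresponding crossover that defines the $z$-index (the largest $d$ with $M(d)\ge d$, i.e.\ $n/d^{\gamma-2} = d$) yields $z = \Omega(n^{1/(\gamma-1)})$. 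Substituting this $z$ and $m=n\overline{d}/2$ into the edge-sampling cost — of order $m/z$ in the SM and $m/\sqrt{z}$ in the HDM — gives the second SM term $\otilde(n^{1-1/(\gamma-1)}\overline{d})$ and the HDM bound $\otilde(n^{1-1/(2(\gamma-1))}\overline{d})$, respectively.

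The main obstacle I expect is precisely this transfer from the one-sided power-law bound on $N$ to a usable bound on the edge-weighted quantity $M$ and hence on $z$: one must verify that $\gamma-2>0$ so the tail sum converges and the hidden constant stays bounded, that the crossover values computed from the continuous asymptotics genuinely lower-bound the integer-valued indices despite the floor in \Def{powerlaw}, and that the crossover degrees fall inside the regime where the general complexity bound is valid. Once the two index lower bounds $h=\Omega(n^{1/\gamma})$ and $z=\Omega(n^{1/(\gamma-1)})$ are in hand, the remainder — plugging them into the general theorem and simplifying with $m=\Theta(n\overline{d})$ — is routine arithmetic.
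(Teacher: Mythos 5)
Your overall strategy is exactly the paper's route: lower-bound the two fatness indices from the power-law hypothesis, then substitute them, together with $m=\Theta(n\overline{d})$, into the general complexity theorem for \mainalg{} (in the paper, \Clm{power} plugged into \Thm{main-short}). Your $h$-index half is correct and matches the paper's argument: the crossover $\tau n/d^{\gamma-1}=d$ gives $h=\Omega(n^{1/\gamma})$, hence the head term $\otilde(n^{1-1/\gamma})$.

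The $z$-index half, however, has a genuine gap. The paper's $z$-index (\Def{findex}) is $z=\min_{d:N(d)>0}\sqrt{d\cdot N(d)}$, and the paper's general bound (\Thm{main-short}) charges $\otilde(m/z^{2})$ for edge sampling in the SM and $\otilde(m/z)$ in the HDM. You instead take $z$ to be the crossover of the tail mass $M(d)=\sum_{r\ge d}r\,n(r)$, derive $z=\Omega(n^{1/(\gamma-1)})$ for \emph{that} quantity, and pair it with anticipated costs $m/z$ (SM) and $m/\sqrt{z}$ (HDM). On power laws your index is essentially the square of the paper's, and your cost formulas are the paper's with $z^{2}$ replaced by $z$, so the two discrepancies cancel and your final exponents are right---but only by this coincidence. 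Two concrete problems. (i) Read against the paper's definition, the claim $z=\Omega(n^{1/(\gamma-1)})$ is false: a pure power law with exponent $\gamma$ has $z=\Theta(n^{\frac{1}{2(\gamma-1)}})$, which is exactly the (tight) bound \Clm{power} proves. (ii) The general theorem you anticipate is not true: what governs the edge-sampling error at threshold $d$ is $d\cdot N(d)$ (the requirement $q\ge c\eps^{-2}m/(d\,N(d))$ in \Lem{edge}), not the tail mass $M(d)$, and these can be far apart, because a single vertex of enormous degree inflates $M(d)$ at every $d$ below it without making intermediate values of $N(d)$ any easier to estimate. For instance, take one vertex of degree $A$, one of degree $A^{2}$, and all other degrees $O(1)$: your index equals $A^{2}$, yet with only $q=\otilde(\eps^{-2}m/A^{2})$ edge samples the estimator at any $d$ between $O(1)$ and $A$ almost never hits an edge of the degree-$A$ vertex, so it concentrates near $1$ while $N(d)=2$, and the output is not an $(\eps,\eps)$-estimate with high probability. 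The repair is simply to bound the paper's index: for $d\le(\tau n)^{1/(\gamma-1)}$ one has $d\,N(d)\ge d\lfloor\tau n/d^{\gamma-1}\rfloor=\Omega\bigl(\tau n/d^{\gamma-2}\bigr)=\Omega\bigl((\tau n)^{\frac{1}{\gamma-1}}\bigr)$, while any larger $d$ with $N(d)>0$ has $d\,N(d)\ge d>(\tau n)^{\frac{1}{\gamma-1}}$; hence $z=\Omega(n^{\frac{1}{2(\gamma-1)}})$, and \Thm{main-short} yields $m/z^{2}=\otilde(n^{1-\frac{1}{\gamma-1}}\overline{d})$ and $m/z=\otilde(n^{1-\frac{1}{2(\gamma-1)}}\overline{d})$ as required.
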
 

In most real-world instances, the average degree $\overline{d}$ is typically constant. Thus, the complexities above are strongly sublinear.
For example, when $\gamma = 2$, we get $\otilde(n^{1/2})$
for both models. When $\gamma = 3$, we get $\otilde(n^{2/3})$
and $\otilde(n^{3/4})$.

Our main result is more nuanced, and holds for all degree distributions. 
If the ccdh has a heavy tail, we expect
$N(d)$ to be reasonably large even for large values of $d$.
We describe two formalisms of this notion, through
\emph{fatness indices}.

\begin{definition} The \emph{$h$-index} of the degree distribution is the largest $d$
such that there are at least $d$ vertices of degree at least $d$. 
\end{definition}

This is the exact analogy of the bibliometric $h$-index~\cite{H05}.
As we show in the \Sec{indices},
$h$ can be approximated by $\min_d (d+N(d))/2$. 
A more stringent index is obtained by replacing the arithmetic mean by 
the (smaller) geometric mean.

\begin{definition} \label{def:findex} The \emph{$z$-index} of the degree
distribution is $z = \min_{d: N(d) > 0} \sqrt{d\cdot N(d)}$.
\end{definition}

Our main theorem asserts that large $h$ and $z$ indices lead to a sublinear algorithm
for degree distribution estimation. \Thm{power} is a direct corollary
    obtained by plugging in values of the indices for power
    laws.

\begin{theorem} \label{thm:main-short} 
For any $\eps > 0$, the \mainalg{} algorithm outputs (with high probability) an $(\eps,\eps)$-approximation to the ccdh,
and makes the following number of queries.
\begin{asparaitem}
    \item SM: $\otilde(n/h + m/z^2)$
    \item HDM: $\otilde(m/z)$
\end{asparaitem}
\end{theorem}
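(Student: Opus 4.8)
The plan is to estimate $N(d)$ at a geometric sequence of scales $d \in \{\lceil(1+\eps)^i\rceil\}$ and rely on the monotonicity of the ccdh to interpolate to intermediate values; there are only $\otilde(1)$ such scales, so a union bound over them costs only polylogarithmic factors, and the bicriteria slack in \Def{approx} precisely absorbs the grid spacing. For each scale I would split the estimation into a \emph{head} regime (values of $d$ for which $N(d)$ is large) and a \emph{tail} regime (large $d$, where $N(d)$ is small), handled by \estimatethroughvertexsamples{} and \estimatethroughedgesamples{} respectively, and then argue that the two regimes jointly cover every $d$.

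For the head I would use plain uniform vertex sampling: a uniform random vertex has degree at least $d$ with probability $N(d)/n$, so by a Chernoff bound a single uniform sample of $\otilde(n/N(d))$ vertices yields a $(1\pm\eps)$-estimate of $N(d)$ with high probability. Taking the head to be $\{d : N(d) = \Omega(h)\}$, the definition of the $h$-index guarantees that the threshold scale itself satisfies $d \approx h$, so one vertex sample of size $\otilde(n/h)$ handles all head scales simultaneously; in the SM the needed degrees come for free from degree queries.

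For the tail I would use a reweighted edge-sampling estimator. Sampling an edge-endpoint $v$ with probability $d_v/(2m)$, the variable $X = (2m/d_v)\,\bone[d_v \ge d]$ is unbiased for $N(d)$, and its second moment is $\EX[X^2] = 2m\sum_{v:\, d_v \ge d} 1/d_v \le (2m/d)N(d)$, so the relative variance is at most $2m/(d\,N(d)) \le 2m/z^2$ by \Def{findex}. Hence $\otilde(m/z^2)$ edge samples, combined via a median-of-means argument, give a $(1\pm\eps)$-estimate for every tail scale at once. Since edges are not directly queryable, I would realize them through the edge-simulation procedure of Eden et al.~\cite{ELRS15,ERS17}, which produces approximately uniform, weakly correlated edges from independent uniform vertex queries and neighbor queries; the point to verify is that the bounded correlation guaranteed there inflates the variance only by a constant, and that the simulation overhead stays within the $\otilde(n/h + m/z^2)$ budget. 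Combining the head and tail estimators and choosing the threshold so their regimes overlap yields the SM bound.

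The main obstacle is the HDM, where degree queries are forbidden, so neither the test $d_v \ge d$ nor the weight $1/d_v$ can be evaluated directly. Here I would estimate each sampled vertex's degree up to a constant factor by counting collisions among $\otilde(\sqrt{d_v})$ random neighbor queries (a birthday-paradox estimator) and feed these estimates into the reweighted estimator above. The delicate part is twofold: controlling the compounded error from approximate edge simulation, approximate degree estimates, and the reweighting \emph{simultaneously}; and bounding the total cost, since the per-vertex $\sqrt{d_v}$ overhead inflates the tail cost by a factor of roughly $z$, turning $\otilde(m/z^2)$ into $\otilde(m/z)$, which in this model also dominates the head term $\otilde(n/h)$. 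Verifying that the birthday estimator is accurate enough and that its variance composes correctly with the edge-sampling variance is where I expect the bulk of the technical work to lie.
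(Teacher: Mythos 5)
Your proposal takes essentially the same approach as the paper: uniform vertex sampling for degrees with $N(d) = \Omega(h)$, a $1/d_u$-weighted degree-proportional estimator for the tail whose relative second moment is at most $2m/(d\,N(d)) \leq 2m/z^2$, realized through the two-stage edge simulation of Eden et al.~\cite{ELRS15,ERS17}, plus birthday-paradox degree estimation with the $\sqrt{d_v}$-overhead accounting that gives $\otilde(m/z)$ in the HDM. The verification points you flag are precisely the content of the paper's \Lem{wt}, \Lem{edge}, and \Thm{mainbound}: in particular, the correlation of the simulated edges is controlled there by showing that the first-stage sample's total weight concentrates (each vertex's weight is capped by $2N(d)/d$), which succeeds with only $r = \otilde(n/h)$ first-stage vertices because, by \Lem{hindex}, $d = \Omega(h)$ whenever the tail estimator is invoked, so your outline matches the paper's proof.
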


\subsection{Challenges and Main Idea} \label{sec:challenges}

The heavy-tailed behavior of the real degree distribution poses the primary challenge to computing
$(\eps,\eps)$-estimates to the ccdh. As $d$ increases, there are fewer and fewer vertices of that degree.
Sampling uniform random vertices is inefficient when $N(d)$ is small.
A natural idea to find high degree vertices to pick a random neighbor of a random vertex. Such a sample
is more likely to be a high degree vertex. This is the idea behind methods like snowball sampling, forest fire sampling, random walk sampling,
graph sample-and-hold, etc.~\cite{RT12,LKJ06,EHR08,LF06,PSE15,ZKS15,ANK14}. But these lead to biased samples, since
vertices with the same degree may be picked with differing probabilities.

%
A direct extrapolation/scaling of the degrees in the observed graph does not provide an accurate
estimate. Our experiments show that existing methods always miss the head or the tail.
A more principled approach was proposed recently by Zhang et al.~\cite{ZKS15}, by casting
the estimation of the unseen portion of the distribution as an optimization problem. 
From a mathematical standpoint, the vast majority of existing results tend to analyze the KS-statistic,
or some $\ell_p$-norm. As we mentioned earlier, this does not work well for measuring the quality
of the estimate at all scales. As shown by our experiments, none of these methods give accurate
estimate for the entire ccdh with less than $5\%$ of the vertices.

The main innovation in \mainalg{} comes through the use of a recent theoretical
technique to simulate edge samples through vertex samples~\cite{ELRS15,ERS17}. 
The sampling of edges occurs through two stages. In the first stage,
the algorithm samples a set of $r$ vertices and sets up a distribution over 
the sampled vertices such that any edge adjacent to a sampled vertex may be sampled with uniform probability. In the second stage, it samples $q$ edges
from this distribution. While a single edge is uniform random, the set of edges are correlated. 

For a given $d$, we define a weight function on the edges, such that the total weight is exactly $N(d)$.
\mainalg{} estimates the total weight by scaling up the average weight on a random
sample of edges, generated as discussed above. The difficulty in the analysis is the correlation
between the edges. Our main insight is that if the degree distribution has a fat tail,
this correlation can be contained even for sublinear $r$ and $q$. Formally, this is achieved
by relating the concentration behavior of the average weight of the sample to
the $h$ and $z$-indices. The final algorithm combines this idea with vertex sampling
to get accurate estimates for all $d$.
 
The hidden degrees model is dealt with 
using birthday paradox techniques formalized by Ron and Tsur~\cite{RT16}. It is possible to estimate the degree $d_v$
using $O(\sqrt{d_v})$ neighbor queries. But this adds overhead to the algorithm,
especially for estimating the ccdh at the tail. As discussed earlier, we need methods that bias
towards higher degrees, but this significantly adds to the query cost of actually estimating
the degrees. 

\subsection{Related Work} \label{sec:related}

There is a rich body of literature on generating a graph sample that reveals graph properties of the larger
``true" graph. We do not attempt to fully survey this literature, and only refer to results
directly related to our work. The works of Leskovec \& Faloutsos~\cite{LF06},
Maiya \& Berger-Wolf~\cite{MB11}, and Ahmed, Neville, \& Kompella~\cite{ANK10,ANK14} provide excellent surveys of multiple sampling methods.

There are a number of sampling methods based on random crawls: forest-fire~\cite{LF06},
snowball sampling~\cite{MB11}, and expansion sampling~\cite{LF06}. As has been detailed
in previous work, these methods tend to bias certain parts of the network, which can
be exploited for more accurate estimates of various properties~\cite{LF06, MB11, RT12}.
A series of papers by Ahmed, Neville, and Kompella~\cite{ANK10,ahmed12socialnets,ANK14,ahmed2014graph}
have proposed alternate sampling methods that combine random vertices and edges
to get better representative samples. 

All these results aim to capture numerous properties of the graph, using a single graph sample.
Nonetheless, there is much previous work focused on the degree distribution.
Ribiero and Towsley~\cite{RT12}
and Stumpf and Wiuf~\cite{SW05} specifically study degree distributions. Ribiero and Towsley~\cite{RT12} do detailed analysis
on degree distribution estimates (they also look at the ccdh) for a variety of these sampling
methods. Their empirical results show significant errors either at the head or the tail.
We note that almost all these results end up sampling up to 20\% of the graph to estimate the degree distribution.

Zhang et al. observe that the degree distribution of numerous sampling methods is a random
linear projection of the true distribution~\cite{ZKS15}. They attempt to invert this (ill-conditioned)
linear problem, to correct the biases. This leads to improvement in the estimate, but the empirical
studies typically sample more than $10\%$ of the vertices for good estimates.

A recent line of work by Soundarajan et al. on \emph{active probing} also has flavors of graph sampling~\cite{SEGP16,SEGP17}.
In this setting, we start with 
a small, arbitrary subgraph and try to grow this subgraph to achieve some coverage objective
(like discover the maximum new vertices, find new edges, etc.). 
The probing schemes devised in these papers outperform uniform random sampling methods
for coverage objectives.

Some methods try to match the shape/family of the distribution, rather than estimate it as a whole~\cite{SW05}.
Thus, statistical methods can be used to estimate parameters of the distribution.
But it is reasonably well-established that real-world degree distributions are rarely pure
power laws in most instances~\cite{ClShNe09}. Indeed, fitting a power law is rather challenging and naive regression
fits on log-log plots are erroneous, as results of Clauset-Shalizi-Newman
showed ~\cite{ClShNe09}.
%

The subfield of \emph{property testing and sublinear algorithms for sparse graphs} within theoretical computer science
can be thought of as a formalization of graph sampling to estimate properties.
Indeed, our description of the main problem follows this language.
There is a very rich body of mathematical work in this area (refer to Ron's survey~\cite{Ron-survey}).
Practical applications of graph property testing are quite rare, and we are only
aware of one previous work on applications for finding dense cores in router networks~\cite{GRWW08}.
The specific problem of estimating the average degree (or the total
number of edges) was studied by Feige~\cite{F06} and Goldreich-Ron~\cite{GR08}.
Gonen et al. and Eden et al. focus on the problem of estimating higher moments 
of the degree distribution~\cite{GRS11,ERS17}. One of the main techniques we use
of simulating edge queries was developed in sublinear algorithms results of Eden et al.~\cite{ELRS15, ERS17}
in the context of triangle counting and degree moment estimation. We stress that all
these results are purely theoretical, and their practicality is by no means obvious.

On the practical side, Dasgupta, Kumar, and Sarlos study average degree estimation in 
real graphs, and develop alternate algorithms~\cite{DaKu14}. They require the graph to have low mixing time
and demonstrate that the algorithm has excellent behavior in practice (compared to implementations
of Feige's and the Goldreich-Ron algorithm~\cite{F06,GR08}). Dasgupta et al. note
that sampling uniform random vertices is not possible in many settings, and thus
they consider a significantly weaker setting than SM or HDM. Chierichetti et al.
focus on  sampling uniform random vertices, using only a small set of
seed vertices and neighbor queries~\cite{ChDa+16}.

We note that there is a large body of work on sampling graphs from a stream~\cite{mcgregor2014graph}.
This is quite different from our setting, since a streaming algorithm
observes every edge at least once. The specific problem of estimating the 
degree distribution at all scales was considered by Simpson et al.~\cite{OSM15}.
They observe many of the challenges we mentioned earlier: the difficulty of estimating
the tail accurately, finding vertices at all degree scales, and combining estimates from the head and the tail.

\section{Preliminaries} \label{sec:prelim}
We say that the input graph $G$ has $n$ vertices and $m$ edges and $m \geq n$ (since isolated vertices are not relevant here).
For any vertex $v$, let $\Gamma(v)$ be the neighborhood of $v$, and $d_v$ be the degree. As mentioned earlier, $n(d)$ is the number of vertices of degree $d$
and $N(d) = \sum_{r \geq d} n(r)$ is the ccdh at $d$. We use ``u.a.r."
as a shorthand for ``uniform at random". We stress that the all mention of probability and error is with respect to
the randomness of the sampling algorithm. There is no stochastic assumption on the input graph $G$. We use the shorthand $A \in (1\pm\alpha)B$ for $A \in [(1-\alpha)B, (1+\alpha)B]$. We will apply the following (rescaled) Chernoff bound.

\begin{theorem} \label{thm:chernoff} [Theorem 1 in~\cite{DuPa}] Let $X_1, X_2, \ldots, X_k$
be a sequence of iid random variables with expectation $\mu$. Furthermore, $X_i \in [0,B]$. 
\begin{asparaitem}
    \item For $\eps< 1$, $ \Pr[|\sum_{i=1}^k X_i - \mu k| \geq \eps\mu k] \leq 2\exp(-\eps^2 \mu k/3B) $.
    \item For $t \geq 2e\mu$, $\Pr[\sum_{i=1}^k X_i \geq tk] \leq 2^{-tk/B}$.
\end{asparaitem}
\end{theorem}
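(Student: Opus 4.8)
The plan is to derive both inequalities from the standard exponential-moment (Chernoff) method, after first normalizing the summands to the unit interval. First I would rescale, setting $Y_i = X_i/B$ so that each $Y_i \in [0,1]$ with $\EX[Y_i] = \mu/B$; writing $T = \sum_{i=1}^k Y_i$ and $\lambda = \EX[T] = \mu k/B$, both target events translate cleanly, since $\sum_i X_i \geq tk$ becomes $T \geq tk/B$ and $|\sum_i X_i - \mu k| \geq \eps\mu k$ becomes $|T - \lambda| \geq \eps\lambda$. This reduces the theorem to the textbook Chernoff bounds for $[0,1]$-valued summands, and the stated constants then emerge directly.

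The common engine for both parts is Markov's inequality applied to $e^{\theta T}$. By independence $\EX[e^{\theta T}] = \prod_{i=1}^k \EX[e^{\theta Y_i}]$, and the key per-variable estimate is the convexity bound $e^{\theta y} \leq 1 + y(e^\theta - 1)$ for $y \in [0,1]$ (the chord lies above the convex exponential on the interval). Taking expectations gives $\EX[e^{\theta Y_i}] \leq 1 + (e^\theta - 1)\mu/B \leq \exp((e^\theta-1)\mu/B)$, hence $\Pr[T \geq a] \leq \exp(-\theta a + (e^\theta - 1)\lambda)$ for every $\theta > 0$; the same inequality with a negative parameter controls the lower tail.

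For the first (relative-error) bound I would optimize $\theta$ in the upper- and lower-tail versions separately, recovering the classical forms $\Pr[T \geq (1+\eps)\lambda] \leq (e^\eps/(1+\eps)^{1+\eps})^\lambda$ and $\Pr[T \leq (1-\eps)\lambda] \leq (e^{-\eps}/(1-\eps)^{1-\eps})^\lambda$. The elementary inequalities $\eps - (1+\eps)\ln(1+\eps) \leq -\eps^2/3$ and $-\eps - (1-\eps)\ln(1-\eps) \leq -\eps^2/2$, both valid on $0 < \eps < 1$, simplify these to $\exp(-\eps^2\lambda/3)$ and $\exp(-\eps^2\lambda/2) \leq \exp(-\eps^2\lambda/3)$; a union bound over the two tails together with the substitution $\lambda = \mu k/B$ yields $2\exp(-\eps^2\mu k/(3B))$, matching the statement. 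For the second (large-deviation) bound I would instead set $a = tk/B$ and choose $\theta = \ln(a/\lambda)$ in the one-sided estimate, obtaining $\Pr[T \geq a] \leq \exp(-a(\ln(a/\lambda) - 1) - \lambda)$. The hypothesis $t \geq 2e\mu$ is precisely $a \geq 2e\lambda$, so $\ln(a/\lambda) \geq \ln(2e) = 1 + \ln 2$ and thus $\ln(a/\lambda) - 1 \geq \ln 2$; dropping the negative $-\lambda$ term leaves $\Pr[T \geq a] \leq \exp(-a\ln 2) = 2^{-tk/B}$, as claimed.

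The main obstacle I anticipate is purely analytic: verifying the two logarithmic inequalities $\eps - (1+\eps)\ln(1+\eps) \leq -\eps^2/3$ and $-\eps - (1-\eps)\ln(1-\eps) \leq -\eps^2/2$ on $(0,1)$, which is where the specific constants $3$ and $2$ originate and is the usual source of slippage in Chernoff proofs (each reduces to a one-variable calculus check, via the value and the sign of the derivative at $0$ together with monotonicity). Everything else---the rescaling, the convexity moment bound, and the clean choices $\theta = \ln(a/\lambda)$ and the $\eps$-optimization---is mechanical, so I would isolate those two inequalities as lemmas and keep the main argument short.
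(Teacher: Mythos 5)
Your proposal is correct: the rescaling to $[0,1]$-valued variables, the convexity bound $e^{\theta y}\leq 1+y(e^\theta-1)$, the optimized choices of $\theta$, and the two logarithmic inequalities (with constants $3$ and $2$) all check out, as does the choice $\theta=\ln(a/\lambda)$ with $a\geq 2e\lambda$ for the second bound. The paper itself offers no proof---it imports the statement verbatim as Theorem~1 of Dubhashi--Panconesi---and your exponential-moment derivation is precisely the standard argument behind that cited result, so there is nothing to compare beyond noting that you have supplied in full what the paper delegates to a reference.
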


%
%

\subsection{More on Fatness indices} \label{sec:indices} 

The following
characterization of the $h$-index will be useful for analysis.
Since $(d+N(d))/2 \leq \max(d,N(d)) \leq d+N(d)$, this
proves that $\min_d (d+N(d))/2$ is a $2$-factor approximation
to the $h$-index.

\begin{lemma} \label{lem:hindex} $\min_d \max(d,N(d)) \in \{h,h+1\}$
\end{lemma}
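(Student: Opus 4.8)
The plan is to pin down the minimizer by exploiting the monotonicity of $N$ together with the definition of the $h$-index, and then use integrality to collapse the range to exactly two values. Recall that $h$ is by definition the largest $d$ with $N(d) \ge d$; since $N(d) = \sum_{r \ge d} n(r)$ is non-increasing in $d$, this translates into the two inequalities $N(h) \ge h$ and $N(h+1) \le h$ (the failure of $N(d) \ge d$ at $d = h+1$ gives $N(h+1) < h+1$, hence $N(h+1) \le h$ by integrality). These two inequalities are the whole engine of the argument.

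First I would establish the upper bound $\min_d \max(d, N(d)) \le h+1$. This is immediate by evaluating the objective at the single point $d = h+1$: there $N(h+1) \le h < h+1$, so $\max(h+1, N(h+1)) = h+1$.

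Next I would establish the matching lower bound $\min_d \max(d, N(d)) \ge h$ by a case analysis on an arbitrary $d$. If $d \le h$, then monotonicity of $N$ gives $N(d) \ge N(h) \ge h$, so $\max(d, N(d)) \ge N(d) \ge h$. If instead $d \ge h+1$, then trivially $\max(d, N(d)) \ge d \ge h+1 > h$. In either case the objective is at least $h$, so its minimum over all $d$ is at least $h$.

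Finally, since $d$ and $N(d)$ are both integers, $\max(d, N(d))$ is integer-valued, so combining the two bounds forces the minimum to lie in $\{h, h+1\}$, as claimed. (One can in fact read off that it equals $h$ precisely when $N(h) = h$ and equals $h+1$ otherwise, but this refinement is not needed.) I do not expect a genuine obstacle here; the only points requiring care are the faithful translation of the $h$-index definition into the pair $N(h) \ge h$ and $N(h+1) \le h$, and the observation that the minimum is actually attained, which holds because $\max(d, N(d)) \ge d \to \infty$ as $d$ grows.
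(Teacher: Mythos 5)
Your proof is correct, but it proceeds differently from the paper's. The paper fixes $s=\min_d \max(d,N(d))$, takes the \emph{largest} minimizer $d^*$, and splits into cases according to whether $N(d^*)\ge d^*$ or $N(d^*)<d^*$, in each case identifying $h$ in terms of $d^*$ (showing $h=d^*$ or $h=d^*-1$); this requires some delicate reasoning about why no larger $d$ can also be a minimizer. You instead run a sandwich argument anchored at $h$ itself: from the definition of the $h$-index you extract the two inequalities $N(h)\ge h$ and $N(h+1)\le h$, get the upper bound $\min_d \max(d,N(d))\le h+1$ by evaluating at the single point $d=h+1$, and get the lower bound $\ge h$ by the two-case analysis $d\le h$ (monotonicity of $N$) versus $d\ge h+1$ (trivial), finishing by integrality. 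Your route avoids any discussion of where the minimum is attained or of tie-breaking among minimizers, and it cleanly isolates the only two facts about $h$ that matter; as a bonus it yields the exact dichotomy (the minimum is $h$ iff $N(h)=h$, else $h+1$), which the paper's case analysis only implicitly contains. The paper's argument, in exchange, directly exhibits the relationship between the minimizer $d^*$ and $h$, but for the stated lemma your two-sided bound is the more economical proof.
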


\begin{proof} Let $s = \min_d \max(d,N(d))$ and let the minimum be attained at $d^*$.
If there are multiple minima, let $d^*$ be the largest among them. We consider two cases.
(Note that $N(d)$ is a monotonically non-increasing sequence.)

Case 1: $N(d^*) \geq d^*$. So $s = N(d^*)$. Since $d^*$ is the largest minimum, for any $d > d^*$,
$d > N(d^*)$. (If not, then the minimum is also attained at $d > d^*$.) Thus, $d > N(d^*) \geq N(d)$.
For any $d < d^*$, $N(d) \geq N(d^*) \geq d^* > d$. We conclude that $d^*$ is largest $d$
such that $N(d) \geq d$. Thus, $h = d^*$.

If $s \neq h$, then $d^* < N(d^*)$. Then, $N(d^*+1) < N(d^*)$, otherwise the minimum would
be attained at $d^*+1$. Furthermore, $\max(d^*+1,N(d^*+1)) > N(d^*)$, implying $d^*+1 > N(d^*)$.
This proves that $h+1 > s$.

Case 2: $d^* > N(d^*)$. So $s = d^*$. For $d > d^*$, $N(d) \leq N(d^*) < d^* < d$. 
For $d < d^*$, $N(d) \geq d^* > d$ (if $N(d) < d^*$, then $d^*$ would not be the minimizer).
Thus, $d^* - 1$ is the largest $d$ such that $N(d) \geq d$, and $h = d^*-1 = s-1$.
\end{proof}

The $h$-index does not measure $d$ vs $N(d)$ at different scales,
and a large $h$-index only ensures that there are ``enough'' high-degree vertices.
For instance, the h-index   does not distinguish between two different distributions whose ccdh $N_1$ and $N_2$  are such that  $N_1(100)=100$ and $N_1(d)=0$ for $d>100$, and $N_2(100,000)=100$ and $N_2(d)=100$ for all other values of $d\geq100$. The $h$-index in both these cases is 100.

The $h$ and $z$-indices are related to each other.

\begin{claim} \label{clm:index} $\sqrt{h} \leq z \leq h$.
\end{claim}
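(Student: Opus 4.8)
The plan is to prove the two inequalities separately, in each case reducing a bound on $z=\min_{d:N(d)>0}\sqrt{d\cdot N(d)}$ to the defining property $N(h)\ge h$ of the $h$-index together with the monotonicity of the sequence $\{N(d)\}$. The lower bound $\sqrt{h}\le z$ will come from a uniform pointwise estimate $d\cdot N(d)\ge h$, while the upper bound $z\le h$ will come from comparing the geometric mean to the maximum and then invoking \Lem{hindex}.

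For the lower bound, I would show that $d\cdot N(d)\ge h$ holds for \emph{every} $d$ with $N(d)>0$, and then take square roots and minimize. I split into two cases. If $d\le h$, monotonicity of $N$ gives $N(d)\ge N(h)\ge h$, where the last step is exactly the definition of the $h$-index; since $d\ge 1$ this yields $d\cdot N(d)\ge h$. If instead $d>h$, then $d\ge h+1$ and $N(d)\ge 1$ (as $N(d)>0$ is an integer), so $d\cdot N(d)\ge h+1>h$. In either case $\sqrt{d\cdot N(d)}\ge\sqrt{h}$, and minimizing over admissible $d$ gives $z\ge\sqrt{h}$.

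For the upper bound, the key inequality is that the geometric mean never exceeds the maximum, i.e. $\sqrt{d\cdot N(d)}\le\max(d,N(d))$ for every $d$. Taking the minimum over $d$ gives $z\le\min_d\max(d,N(d))$, and \Lem{hindex} identifies the right-hand side as lying in $\{h,h+1\}$. To pin the value down at $h$, I would argue at the minimizer $d^\ast$ of $\max(d,N(d))$: in the regime $N(d^\ast)\ge d^\ast$ one has $\sqrt{d^\ast N(d^\ast)}\le N(d^\ast)=\min_d\max(d,N(d))$, which \Lem{hindex} equates with the $h$-index, yielding $z\le h$.

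The main obstacle is precisely this last step of the upper bound: the quantity $\min_d\max(d,N(d))$ agrees with $h$ only up to an additive one, and the slack is concentrated at the ``knee'' of the staircase where $d$ and $N(d)$ cross --- exactly the location where the geometric mean sits closest to (and, in the delicate integer-rounding case, can marginally exceed) $h$. So the care in the argument goes into controlling this crossing via \Lem{hindex} rather than into any heavy computation. The lower bound, by contrast, is a routine two-case pointwise estimate and should present no difficulty.
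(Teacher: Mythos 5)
Your lower bound is correct, and it is more self-contained than the paper's: the pointwise estimate $d\cdot N(d)\ge h$ for every admissible $d$ (via the two cases $d\le h$ and $d\ge h+1$) gives $z\ge\sqrt{h}$ directly, with no appeal to \Lem{hindex}, whereas the paper derives the same inequality from $\max(d,N(d))\le d\cdot N(d)$ together with that lemma.

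The upper bound is where the genuine gap lies, and it sits exactly where you flag it --- but the situation is worse than ``delicate integer rounding'': the step cannot be repaired, because $z\le h$ is false in general. Your argument covers only Case~1 of \Lem{hindex} (where the minimizer satisfies $N(d^*)\ge d^*$). In Case~2 ($N(d^*)<d^*$), the quantity $\min_d\max(d,N(d))$ equals $h+1$, and the minimizer may even have $N(d^*)=0$, in which case it is inadmissible in the minimum defining $z$ (which ranges only over $\{d: N(d)>0\}$) and yields no bound on $z$ at all. Concretely, for a $2$-regular graph (a cycle) on $10$ vertices, $N(1)=N(2)=10$ and $N(d)=0$ for $d\ge 3$, so $h=2$ while $z=\sqrt{10}>2$; more generally, for $d$-regular graphs on $n$ vertices one has $h=d$ but $z=\sqrt{n}$, so the inequality fails by an unbounded factor. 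Note that the paper's own one-line proof has the same hole: ``take the minimum over all $d$'' conflates the restricted minimum defining $z$ with the unrestricted minimum appearing in \Lem{hindex}, and the two agree only when the unrestricted minimizer satisfies $N(d^*)>0$. What is true --- and suffices for the way \Clm{index} is used later in the paper --- is a conditional version: if $N(h+1)>0$, then maximality of $h$ forces $N(h+1)\le h$, hence $z\le\sqrt{(h+1)N(h+1)}\le\sqrt{h(h+1)}<h+1$. So your instinct about where the slack concentrates is right, and the obstacle you could not get around is a defect of the claim itself (shared by the paper's proof), not of your approach.
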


\begin{proof} Since $N(d)$ is integral, if $N(d) > 0$, then $N(d) \geq 1$.
Thus, for all $N(d) > 0$,  $\sqrt{\max(d,N(d))} \leq d\cdot N(d) \leq \max(d,N(d))$.
We take the minimum over all $d$ to complete the proof.
\end{proof}

To give some intuition about these indices, we compute the $h$ and $z$ index for power laws.
The classic power law degree distribution sets $n(d) \propto 1/d^\gamma$, where $\gamma$ is typically in $[2,3]$.

\begin{claim} \label{clm:power} If a degree distribution is bounded below by
a power law with exponent $\gamma$, then $h = \Omega(n^\frac{1}{\gamma})$
and $z = \Omega(n^\frac{1}{2(\gamma-1)})$.
\end{claim}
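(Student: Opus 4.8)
The plan is to substitute the power-law lower bound $N(d) \geq \lfloor \tau n/d^{\gamma-1}\rfloor$ from \Def{powerlaw} directly into the definitions of the two indices and then solve for the critical degree in each case. Since the $h$- and $z$-indices are each an optimum of an elementary expression in $d$ and $N(d)$, once this inequality is in hand the estimates reduce to routine algebra; the only point requiring genuine care is the floor operation, which can render the guarantee vacuous at large $d$.

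For the $h$-index, recall it is the largest integer $d$ with $N(d) \geq d$. I would show that this inequality is \emph{forced} by the power-law bound for all $d$ up to roughly $(\tau n)^{1/\gamma}$. Concretely, it suffices to find $d$ with $\lfloor \tau n/d^{\gamma-1}\rfloor \geq d$. Because $d$ is an integer, $\lfloor x\rfloor \geq d \iff x \geq d$, so this holds exactly when $\tau n/d^{\gamma-1} \geq d$, i.e.\ when $d^\gamma \leq \tau n$, i.e.\ when $d \leq (\tau n)^{1/\gamma}$. Taking $d = \lfloor (\tau n)^{1/\gamma}\rfloor$ then gives $N(d) \geq d$, so $h \geq \lfloor (\tau n)^{1/\gamma}\rfloor = \Omega(n^{1/\gamma})$, using that $\tau$ is a constant.

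For the $z$-index I must lower bound $d\cdot N(d)$ uniformly over all $d$ with $N(d) > 0$, the target being $d \cdot N(d) = \Omega(n^{1/(\gamma-1)})$ so that $z = \min_d \sqrt{d\cdot N(d)} = \Omega(n^{1/(2(\gamma-1))})$. The natural approach is to split at the crossover degree $d_0 = (\tau n)^{1/(\gamma-1)}$, where the power-law expression $\tau n/d^{\gamma-1}$ equals $1$. For $d \leq d_0$ the floor loses at most a factor of $2$ (since $\lfloor x\rfloor \geq x/2$ for $x\geq 1$), so $N(d) \geq \tau n/(2 d^{\gamma-1})$ and hence $d\cdot N(d) \geq \tau n/(2 d^{\gamma-2})$; as $\gamma > 2$ this is decreasing in $d$, so its minimum over this range is attained at the upper endpoint $d = d_0$, where it evaluates to $\frac{1}{2}(\tau n)^{1/(\gamma-1)}$. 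For $d > d_0$ the power-law bound is vacuous, and here I would fall back on integrality: $N(d) > 0$ forces $N(d) \geq 1$, so $d \cdot N(d) \geq d > d_0 = (\tau n)^{1/(\gamma-1)}$. In both regimes $d\cdot N(d) = \Omega(n^{1/(\gamma-1)})$, giving the claimed bound on $z$.

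The step I expect to be the main obstacle — or at least the one most prone to a slip — is the $z$-index analysis at large $d$: the floor in the power-law bound can be $0$, so the bound alone is useless there, and one must remember to invoke integrality of $N(d)$ to cover the tail. Identifying $d_0$ as the exact point where the two regimes meet, and checking that $d\cdot N(d)$ is minimized precisely at this crossover, is what makes the two partial bounds glue into a single $\Omega(n^{1/(\gamma-1)})$ estimate.
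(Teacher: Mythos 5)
Your proposal is correct and follows essentially the same route as the paper's proof: for the $h$-index, evaluating the power-law bound at the critical degree $\approx(\tau n)^{1/\gamma}$, and for the $z$-index, splitting at the crossover $d_0=(\tau n)^{1/(\gamma-1)}$, absorbing the floor with a factor of $2$ below it and invoking integrality of $N(d)$ above it. Your handling of the floor in the $h$-index case (using that $\lfloor x\rfloor\geq d\iff x\geq d$ for integer $d$) is in fact slightly more explicit than the paper's, but the argument is the same.
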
 

\begin{proof} Consider $d \leq \tau n^{1/\gamma}$, where $\tau$ is defined
according to \Def{powerlaw}. Then, $N(d) \geq \lfloor \tau n/(\tau^{1/\gamma} n^{(\gamma-1)/\gamma}) \rfloor$
$= \Omega(n^{1/\gamma})$. This proves the $h$-index bound.

Set $d^* = (\tau n)^{\frac{1}{\gamma-1}}$.
For $d \leq d^*$, $N(d) \geq 1$ and 
$d\cdot N(d) \geq (\tau/2) n/d^{\gamma-2}$ 
$= \Omega(n^\frac{1}{\gamma-1})$.
If there exists no $d > d^*$ such that $N(d) > 0$,
then $z = \Omega(n^\frac{1}{2(\gamma-1)})$.
If there does exist some such $d$, then $z = \Omega(\sqrt{d^*})$
which yields the same value.
\end{proof}

Plugging in values, for $\gamma = 2$, both $h$ and $z$ are $\Omega(\sqrt{n})$.
For $\gamma = 3$, $h = \Theta(n^{1/3})$ and $z = \Theta(n^{1/4})$.  

\subsection{Simulating degree queries for HDM} \label{sec:hdm}

The Hidden Degrees Model does not allow for querying the degree $d_v$ of a vertex $v$. Nonetheless,
it is possible to get accurate estimates of $d_v$ by sampling u.a.r. neighbors (with replacement)
of $v$. This can be done by using the birthday paradox argument,
as formalized by Ron and Tsur~\cite{RT16}.
Roughly speaking, one repeatedly samples neighbors until the same vertex is seen twice.
If this happens after $t$ samples, $t^2$ is a constant factor approximation for $d_v$.
This argument can be refined to get accurate approximations for $d_v$ using $O(\sqrt{d_v})$
random edge queries.

\begin{theorem} \label{thm:birthday} [Theorem 3.1 of~\cite{RT16}, restated]
Fix any $\alpha > 0$.  There is an algorithm that outputs a value in $(1\pm\alpha)d_v$
with probability $> 2/3$, and makes an expected $O(\sqrt{d_v}/\alpha^2)$ u.a.r. neighbor samples.
\end{theorem}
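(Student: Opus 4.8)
The plan is to reprove this via the collision-counting form of the birthday paradox, rather than the qualitative ``$t^2$'' version sketched in the preceding paragraph. First I would fix the estimator. Draw $k$ u.a.r.\ neighbors $s_1,\dots,s_k$ of $v$ with replacement (each $s_i$ uniform and independent over the $d_v$ neighbors of $v$), let $X_{ij} = \bone[s_i = s_j]$ for $i<j$, and let $C = \sum_{i<j} X_{ij}$ count the colliding pairs. Since $\Pr[s_i = s_j] = 1/d_v$ for $i \neq j$, linearity of expectation gives $\EX[C] = \binom{k}{2}/d_v$, so $\binom{k}{2}/\EX[C] = d_v$ exactly. The natural estimator is therefore $\widehat{d} = \binom{k}{2}/C$, and the whole task reduces to showing that $C$ concentrates multiplicatively around its mean.

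Second, I would bound $\mathrm{Var}(C)$. The key (and pleasantly clean) observation is that the indicators $X_{ij}$ are \emph{pairwise uncorrelated}. Pairs on disjoint index sets are independent, so only the shared-index case needs checking: for distinct $i,j,\ell$, $\EX[X_{ij}X_{i\ell}] = \Pr[s_i = s_j = s_\ell] = 1/d_v^2 = \EX[X_{ij}]\EX[X_{i\ell}]$, giving zero covariance. Hence $\mathrm{Var}(C) = \binom{k}{2}\mathrm{Var}(X_{12}) \le \binom{k}{2}/d_v = \EX[C]$.

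Third, I would choose $k = \Theta(\sqrt{d_v}/\alpha)$ so that $\EX[C] = \Theta(1/\alpha^2)$; then $\mathrm{Var}(C)/\EX[C]^2 \le 1/\EX[C] = O(\alpha^2)$, and Chebyshev yields $C \in (1\pm c\alpha)\EX[C]$ with probability at least $2/3$ after tuning constants. On this event $\widehat{d} = \binom{k}{2}/C \in (1\pm\alpha)d_v$, once constants are adjusted so the reciprocal stays within the target band and is well-defined ($C>0$ whenever a collision occurs, which holds on the good event). Taking more samples, up to $O(\sqrt{d_v}/\alpha^2)$, only strengthens the concentration and still matches the stated complexity.

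Finally, the genuine obstacle is that $d_v$ is unknown, so $k$ cannot be fixed a priori. I would replace the fixed budget by a stopping rule: keep sampling and accumulating collisions until $C$ first reaches a threshold $\Theta(1/\alpha^2)$, then output $\binom{K}{2}/C$ with $K$ the random number of samples drawn at that point. The subtlety is that $K$ is now a stopping time correlated with $C$, so the clean independence and variance bookkeeping above must be redone for the stopped process---via a Wald-type or martingale argument---to show both that the output still lands in $(1\pm\alpha)d_v$ and that $\EX[K] = O(\sqrt{d_v}/\alpha)$, which is comfortably within the stated $O(\sqrt{d_v}/\alpha^2)$ budget. This coupling between the random sample size and the collision count is the main technical point; everything else is routine second-moment concentration. (Alternatively, one may simply invoke \cite{RT16} verbatim, since the statement here is a restatement of their Theorem 3.1.)
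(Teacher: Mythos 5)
You should first note what the paper itself does with this statement: nothing is proved. The theorem is imported verbatim from Ron and Tsur, and the surrounding text says explicitly that ``for the sake of the theoretical analysis, we will simply assume this theorem.'' So there is no internal proof to compare yours against, and your closing remark---that one may simply invoke \cite{RT16}---is precisely the paper's own treatment. It is also worth noting that your stopped collision-counting estimator coincides with the procedure the paper actually implements as \degest{} in the experimental section: sample neighbors until the number of pairwise collisions reaches a fixed threshold, then output $\binom{|S|}{2}/k$; there too the variance analysis is delegated to \cite{RT16} (``Ron and Tsur essentially prove that this estimate has low variance'').

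On the merits of your reconstruction: the fixed-budget half is correct and clean. The identity $\EX[C]=\binom{k}{2}/d_v$, the pairwise uncorrelatedness of the collision indicators (the shared-index computation $\EX[X_{ij}X_{i\ell}]=1/d_v^2$ is the right check), the bound $\mathrm{Var}(C)\le \EX[C]$, and Chebyshev with $k=\Theta(\sqrt{d_v}/\alpha)$ all go through; indeed this yields $O(\sqrt{d_v}/\alpha)$ samples, even better than the stated $O(\sqrt{d_v}/\alpha^2)$. The genuine gap is the one you flag but do not close: $d_v$ is unknown, and once you switch to the stopping rule, $K$ and $C$ are dependent, $\binom{K}{2}$ sits in the numerator, and the fixed-length Chebyshev argument no longer applies as stated---``redo it via Wald or martingales'' is a plan, not a proof. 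A standard way to close it without any stopped-process machinery is a two-phase argument: in phase one, sample until the first collision and use birthday-paradox tail bounds ($\Pr[T\le c_1\sqrt{d_v}]\le c_1^2/2$ by Markov on the expected number of collisions, and $\Pr[T> c_2\sqrt{d_v}]\le \exp\bigl(-\binom{\lceil c_2\sqrt{d_v}\rceil}{2}/d_v\bigr)$) to conclude that $T^2=\Theta(d_v)$ with probability at least $9/10$ and $\EX[T]=O(\sqrt{d_v})$; in phase two, draw \emph{fresh} samples with the budget $k=\Theta(T/\alpha)$, which is deterministic given $T$, and run your fixed-budget estimator. Conditioning on phase one decouples the two phases, the expected total number of samples is $O(\sqrt{d_v}/\alpha)$, and a union bound gives success probability above $2/3$ after tuning constants. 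With that substitution your argument becomes a complete, self-contained proof; as written, it is an outline whose only nontrivial step is deferred, with the legitimate fallback of citing \cite{RT16}, which is what the paper does.
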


For the sake of the theoretical analysis, we will simply assume this theorem. In
the actual implementation of \mainalg{}, we will discuss the specific parameters
used. It will be helpful to abstract out the estimation of degrees through the following
corollary. The procedure \degest$(v)$ will be repeatedly invoked by \mainalg.
This is a direct consequence of setting $\alpha - \eps/10$ and applying \Thm{boost}
with $\delta = 1/n^3$.

\begin{corollary} \label{cor:birthday} There is an algorithm \degest{} that takes
as input a vertex $v$, and has the following properties:
\begin{asparaitem}
    \item For all $v$: with probability $> 1 - 1/n^3$,
    the output \degest$(v)$ is in $(1\pm \eps/10) d_v$.
    \item The expected running time and query complexity of \degest$(v)$ is $O(\eps^{-2}\sqrt{d_v}\log n)$.
\end{asparaitem}
\end{corollary}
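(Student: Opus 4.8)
The plan is to obtain \degest{} by amplifying the constant-success-probability estimator of \Thm{birthday} through independent repetition and a median. First I would instantiate \Thm{birthday} with accuracy parameter $\alpha = \eps/10$. This yields a single-shot estimator $A(v)$ that returns a value in $(1\pm\eps/10)d_v$ with probability $> 2/3$, using an expected $O(\sqrt{d_v}/(\eps/10)^2) = O(\eps^{-2}\sqrt{d_v})$ u.a.r.\ neighbor samples. The remaining task is purely a probability-boosting argument: drive the failure probability down from $1/3$ to $1/n^3$ while inflating the query cost by only a $\Theta(\log n)$ factor.

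The boosting step is the median trick. Run $k = \Theta(\log n)$ independent copies $A_1(v), \dots, A_k(v)$ and output their median. Call copy $i$ \emph{good} if its output lies in the two-sided interval $I = [(1-\eps/10)d_v,\ (1+\eps/10)d_v]$, and let $Y_i$ be its indicator; the $Y_i$ are iid with $\mu = \EX[Y_i] > 2/3$. The key structural observation is that $I$ is an interval: if strictly more than $k/2$ of the copies are good, then fewer than $k/2$ outputs fall below $\min I$ and fewer than $k/2$ fall above $\max I$, so the median necessarily lies in $I$. Thus it suffices to show that a majority of copies are good with high probability.

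To control the number of good copies I would apply the rescaled Chernoff bound of \Thm{chernoff} to $\sum_i Y_i$, with $B = 1$ since $Y_i \in [0,1]$. Because $\mu \ge 2/3$, the event $\{\sum_i Y_i \le k/2\}$ is contained in a relative deviation of at least $1/4$ from the mean $\mu k$, so $\Pr[\sum_i Y_i \le k/2] \le 2\exp(-\Omega(k))$; taking $k = c\log n$ for a suitable constant $c$ makes this at most $1/n^3$, establishing the first bullet of the corollary. For the query/running-time bullet, linearity of expectation over the $k = O(\log n)$ independent runs multiplies the per-run expectation $O(\eps^{-2}\sqrt{d_v})$ by $O(\log n)$, giving the claimed expected cost $O(\eps^{-2}\sqrt{d_v}\log n)$ (computing the median of $k$ numbers is lower order).

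The one point that needs care — and the only real subtlety — is that \Thm{birthday} bounds the number of samples of each copy only in expectation, not in the worst case. This is harmless for both claims: correctness of an individual copy (the event $Y_i = 1$) holds regardless of how long that copy runs, so the median argument for the first bullet is unaffected; and the expected-cost bullet follows directly from linearity of expectation, which requires no worst-case runtime bound. (If a worst-case guarantee were ever needed, one could truncate each copy at a constant times its expected sample count and absorb the resulting $O(1)$ increase in per-copy failure probability by slightly enlarging $k$.)
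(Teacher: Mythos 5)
Your proposal is correct and is essentially the paper's own argument: the paper obtains the corollary by setting $\alpha = \eps/10$ in \Thm{birthday} and invoking a generic probability-amplification (``boosting'') theorem with failure probability $\delta = 1/n^3$, which is exactly the median-of-$\Theta(\log n)$-independent-copies argument you carry out. You have simply filled in the Chernoff-bound and median-interval details that the paper delegates to that cited boosting theorem, including the (correct) observation that per-copy expected query cost composes by linearity of expectation.
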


%

We will assume that invocations to $\degest$ with the same arguments use the same sequence of random bits.
Alternately, imagine that a call to $\degest(v,\eps)$ stores the output, so subsequent calls output the same value. For the sake of analysis, it is convenient to imagine that \degest{$(v)$} is called once for all vertices $v$, and these results are stored.

\begin{definition} \label{def:deg} The output \degest$(v)$ is denoted by $\hat{d_v}$.
The random bits used in all calls to \degest{} is collectively denoted $\Lambda$.
(Thus, $\Lambda$ completely specifies all the values $\{\hat{d_v}\}$.)
We say $\Lambda$ is \emph{good} if $\forall v \in V$, $\hat{d_v} \in (1\pm \eps/10)d_v$.
\end{definition}

The following is a consequence of conditional probabilities.

\begin{claim} \label{clm:cond} Consider any event $\mathcal{A}$, such that for any good $\Lambda$,
$\Pr[\mathcal{A} | \Lambda] \geq p$. Then $\Pr[\mathcal{A}] \geq p - 1/n^2$.
\end{claim}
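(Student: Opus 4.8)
The plan is to decompose $\Pr[\mathcal{A}]$ over the randomness $\Lambda$ of the degree estimator, apply the law of total probability, and keep only the contribution coming from good $\Lambda$. The first step is to bound the probability that $\Lambda$ fails to be good. By \Cor{birthday}, for each individual vertex $v$ the estimate $\hat{d_v}$ lies in $(1\pm\eps/10)d_v$ with probability at least $1 - 1/n^3$. Since, by \Def{deg}, $\Lambda$ is good exactly when \emph{every} one of the $n$ vertices has an accurate estimate, a union bound over the $n$ vertices gives $\Pr[\Lambda \text{ not good}] \leq n \cdot n^{-3} = n^{-2}$, and hence $\Pr[\Lambda \text{ good}] \geq 1 - n^{-2}$.

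Next I would write $\Pr[\mathcal{A}] = \sum_{\Lambda} \Pr[\Lambda]\,\Pr[\mathcal{A}\mid\Lambda]$ and discard the terms corresponding to non-good $\Lambda$; this only decreases the sum since every term is nonnegative. Restricting the sum to good $\Lambda$ and invoking the hypothesis $\Pr[\mathcal{A}\mid\Lambda]\geq p$, which holds for every good $\Lambda$, I obtain $\Pr[\mathcal{A}] \geq p \sum_{\Lambda \text{ good}} \Pr[\Lambda] = p\cdot\Pr[\Lambda \text{ good}] \geq p(1 - n^{-2})$. Finally, since $p \leq 1$, I have $p(1 - n^{-2}) = p - p\,n^{-2} \geq p - n^{-2}$, which is the desired bound.

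There is no serious obstacle here; the proof is essentially a union bound followed by conditioning. The one point that requires care is keeping the two sources of randomness separated cleanly: the random bits $\Lambda$ that drive \degest{} on the one hand, and the sampling randomness of \mainalg{} on the other. This separation is what makes conditioning on $\Lambda$ well defined and lets the hypothesis $\Pr[\mathcal{A}\mid\Lambda]\geq p$ be applied termwise in the decomposition. The only place a quantitative estimate enters is the union bound, which succeeds precisely because the per-vertex failure probability $1/n^3$ from \Cor{birthday} is small enough to survive multiplication by $n$ and still leave the $n^{-2}$ slack claimed in the statement.
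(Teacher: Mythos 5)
Your proposal is correct and follows essentially the same argument as the paper's proof: a union bound over the $n$ vertices (using the per-vertex failure probability $1/n^3$ from \Cor{birthday}) to show $\Pr[\Lambda \textrm{ good}] \geq 1 - 1/n^2$, followed by restricting the total-probability decomposition to good $\Lambda$ and using $p \leq 1$ to conclude $\Pr[\mathcal{A}] \geq p(1-1/n^2) \geq p - 1/n^2$. The only difference is that you spell out the intermediate steps slightly more explicitly than the paper does; the substance is identical.
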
 

\begin{proof} The probability that $\Lambda$ is not good is at most
the probability that for \emph{some} $v$, \degest$(v) \notin (1\pm\eps/10)$.
By the union bound and \Cor{birthday}, the probability is at most $1/n^2$.

Note that \\ $\Pr[\mathcal{A}] \geq \sum_{\Lambda \textrm{good}} \Pr[\Lambda] \Pr[\mathcal{A} | \Lambda]$
$ \geq p \Pr[\Lambda \textrm{ is good}]$. 
Since $\Lambda$ is good with probability at least $1-1/n^2$, $\Pr[\mathcal{A}] \geq (1-1/n^2) p \geq p - 1/n^2$.
\end{proof}

For any fixed $\Lambda$, we set $\widehat{N_\Lambda}(d)$ to be $|\{v | \hat{d_v} \geq d\}|$.
We will perform the analysis of \mainalg{} with respect to the $\widehat{N_\Lambda}$-values.

\begin{claim} \label{clm:lambda} Suppose $\Lambda$ is good. For all $v$,
$\widehat{N_\Lambda}(v) \in [N((1+\eps/9)d), N((1-\eps/9)d)]$.
\end{claim}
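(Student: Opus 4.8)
The plan is to prove the two inequalities defining the interval separately, in each case by a direct set-containment argument; the statement should read $\widehat{N_\Lambda}(d) \in [N((1+\eps/9)d), N((1-\eps/9)d)]$ for every $d$ (the index is $d$, not $v$). Recall that $N(x) = |\{v : d_v \geq x\}|$, that $\widehat{N_\Lambda}(d) = |\{v : \hat{d_v} \geq d\}|$, and that goodness of $\Lambda$ means $(1-\eps/10)d_v \leq \hat{d_v} \leq (1+\eps/10)d_v$ for every $v$. Since all three quantities are cardinalities of sets of vertices cut out by a threshold condition, it suffices to compare the thresholds on the true degree $d_v$ and invoke monotonicity of counting.

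For the upper bound $\widehat{N_\Lambda}(d) \leq N((1-\eps/9)d)$, I would take any $v$ counted by $\widehat{N_\Lambda}(d)$, i.e.\ with $\hat{d_v} \geq d$. Goodness gives $(1+\eps/10)d_v \geq \hat{d_v} \geq d$, so $d_v \geq d/(1+\eps/10)$. It then remains to check the elementary inequality $1/(1+\eps/10) \geq 1 - \eps/9$, which rearranges to $(1-\eps/9)(1+\eps/10) \leq 1$; expanding gives $1 - \eps/90 - \eps^2/90 \leq 1$, which always holds. Hence $d_v \geq (1-\eps/9)d$, so $v$ is counted by $N((1-\eps/9)d)$, establishing the containment of vertex sets and thus the inequality on cardinalities.

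For the lower bound $\widehat{N_\Lambda}(d) \geq N((1+\eps/9)d)$, I would argue the reverse containment: take any $v$ with $d_v \geq (1+\eps/9)d$. Goodness gives $\hat{d_v} \geq (1-\eps/10)d_v \geq (1-\eps/10)(1+\eps/9)d$, and expanding the product yields $(1-\eps/10)(1+\eps/9) = 1 + \eps(1-\eps)/90 \geq 1$ for $\eps < 1$. Thus $\hat{d_v} \geq d$, so every vertex counted by $N((1+\eps/9)d)$ is also counted by $\widehat{N_\Lambda}(d)$, giving the desired inequality.

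There is no deep obstacle here; essentially the whole content is verifying the two elementary inequalities, and the only point requiring care is that the mismatch between the $\eps/10$ tolerance of \degest{} and the $\eps/9$ slack in the statement is exactly what places both products on the correct side of $1$ (the slack being slightly larger than the estimation error). The lower-bound computation additionally uses $\eps < 1$, which is the operative regime. Monotonicity of $N$ and the fact that $(1\pm\eps/9)d$ need not be an integer cause no trouble, since the argument operates purely at the level of containments of vertex sets defined by threshold conditions on $d_v$.
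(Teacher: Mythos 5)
Your proof is correct and takes essentially the same route as the paper's: both establish the set containments $\{v : d_v \geq (1+\eps/9)d\} \subseteq \{v : \hat{d_v} \geq d\} \subseteq \{v : d_v \geq (1-\eps/9)d\}$ by checking the two elementary inequalities $(1-\eps/10)(1+\eps/9) \geq 1$ and $(1+\eps/10)(1-\eps/9) \leq 1$ (the paper phrases the second containment contrapositively, but that is an immaterial difference). You also correctly identify the typo in the statement ($\widehat{N_\Lambda}(v)$ should be $\widehat{N_\Lambda}(d)$, quantified over $d$), and your explicit note that the lower bound needs $\eps < 1$ is a detail the paper leaves implicit.
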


\begin{proof} Since $\Lambda$ is good, $\forall u, \hat{d_u} \in (1\pm \eps/10) d_u$,
Furthermore, if $d_u \geq (1+\eps/9)d$, then $\hat{d_u} \geq (1-\eps/10)(1+\eps/9)d \geq d$.
Analogously, if $d_u \leq (1-\eps/9)d$, then $\hat{d_u} \leq (1+\eps/10)(1-\eps/9)d \leq d$.
Thus, 
$\{u| d_u \geq d(1+\eps/9)\} \subseteq \{u| \hat{d_u} \geq d\}
\subseteq \{u | d_u \geq d(1-\eps/9)\}$.
\end{proof}

\section{The Main Result and \mainalg{}}

We begin by stating the main result, and explaining how heavy tails lead to sublinear algorithms. Note that $D$ refers to a set of degrees, for which we desire an approximation to $N(d)$.

\begin{theorem} \label{thm:main} 
There exists an algorithm \mainalg{} with the following properties.
Let $c$ be a sufficiently large constant. 
Fix any $\eps > 0, \err > 0$. Suppose that the parameters of \mainalg{}
satisfy the following conditions: $r \geq c\eps^{-2}n/\hindex$, $q \geq c\eps^{-2}m/\findex^2$,
$\rep \geq c\log(n/\err)$, $\thresh \geq c\eps^{-2}$.

Then with probability at least $1-\delta$, for all $d \in D$, \mainalg{} outputs an $(\eps,\eps)$-approximation of $N(d)$.

The expected number of queries made depends on the model, \emph{and is independent of the size of $D$}.
\begin{asparaitem}
    \item SM: $O((n/\hindex + m/\findex^2)(\eps^{-2}\log(n/\err)))$.
    \item HDM: $O((m/\findex)(\eps^{-4}\log^2(n/\err)))$.
\end{asparaitem}
\end{theorem}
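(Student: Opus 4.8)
The plan is to build \mainalg{} from two complementary estimators that share a single set of samples, and to argue that for every $d$ at least one of them is accurate. First I would introduce, for each target degree $d$, the edge weight $w_d(\{u,v\}) = \bone[d_u\ge d]/d_u + \bone[d_v\ge d]/d_v$; summing over all edges and counting each vertex once per incident edge gives $\sum_e w_d(e) = \sum_v \bone[d_v\ge d] = N(d)$, and moreover $w_d(e)\le 2/d$. The head and tail of the ccdh call for different primitives. For small $d$ (where $N(d)$ is large) a plain vertex estimator $\hN(d) = (n/r)\cdot|\{v\in S : d_v\ge d\}|$ on the $r$ sampled vertices $S$ works, while for large $d$ (the heavy tail, where uniform vertices rarely have degree $\ge d$) I would use the edge samples together with $w_d$. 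The algorithm decides between the two using the observed count against $\thresh$: if at least $\thresh\ge c\eps^{-2}$ sampled vertices have degree $\ge d$, report the vertex estimate, otherwise the edge estimate.

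For the vertex estimator the analysis is a direct application of \Thm{chernoff}: the indicators $\bone[d_v\ge d]$ are iid Bernoulli with mean $N(d)/n$ and range $[0,1]$, so once the observed count is at least $\thresh$ the estimate lies in $(1\pm\eps)N(d)$; since $r\ge c\eps^{-2}n/\hindex$, this covers all $d$ with $N(d)$ at least about $\hindex$, i.e. $d$ up to about $\hindex$. For the edge estimator I would use the two-stage sampler: conditioned on $S$, pick a vertex of $S$ with probability proportional to its degree and then a uniform neighbor $u$, so each directed edge leaving $S$ is equally likely. Writing $D_S=\sum_{v\in S}d_v$ and $g(u)=\bone[d_u\ge d]/d_u$, a single head $u$ satisfies $\EX[D_S\, g(u)\mid S] = \sum_{v\in S} f_d(v)$ with $f_d(v)=\sum_{u\in\Gamma(v)} g(u)$, and $\EX_S[\sum_{v\in S} f_d(v)] = (r/n)\sum_u g(u)d_u = (r/n)N(d)$. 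Hence $(n/r)\,D_S\cdot(\text{average of } g)$ is an unbiased estimator of $N(d)$, and on a single conditionally uniform sample the value $g(u)\le 1/d$ has mean $N(d)/(2m)$.

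The crux --- and the step I expect to be the main obstacle --- is controlling the correlation among the $\findex$ edge samples, which all depend on the single random set $S$ through the normalizer $D_S$ and through $\sum_{v\in S} f_d(v)$. My plan is to condition on $S$ and split the error into (i) a \emph{bias} term from the fluctuation of these two $S$-dependent sums around their means, and (ii) a \emph{variance} term from the conditionally independent edge draws. For (ii), \Thm{chernoff} with range $1/d$ and mean $N(d)/(2m)$ requires $q=\Omega(m/(\eps^2 d\,N(d)))$ samples; invoking \Def{findex} ($d\,N(d)\ge\findex^2$ whenever $N(d)>0$) shows $q\ge c\eps^{-2}m/\findex^2$ suffices at every $d$ simultaneously. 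For (i), both $D_S$ and $\sum_{v\in S} f_d(v)$ are sums of $r$ iid bounded samples, and I would show they lie within $(1\pm\eps)$ of their means once $r\ge c\eps^{-2}n/\hindex$; this is exactly where the heavy tail enters, since the $\hindex$-index bounds how much degree mass can hide in a few very-high-degree vertices and thereby controls the ranges of these sums. Because each step yields only constant success probability, I would run $\rep\ge c\log(n/\err)$ independent repetitions and take medians to push the per-$d$ failure probability below $\err/n$, then union bound over the at most $n$ distinct breakpoints of the integer-valued sequence $N(\cdot)$; since the same sample is reused across $d$, the query complexity does not depend on $|D|$.

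Finally I would account for queries. In the SM the vertex stage costs $O(r)=O(\eps^{-2}n/\hindex)$ degree queries and the edge stage $O(q)=O(\eps^{-2}m/\findex^2)$, and multiplying by the $\rep=O(\log(n/\err))$ repetitions gives $O((n/\hindex+m/\findex^2)\eps^{-2}\log(n/\err))$. In the HDM exact degrees are unavailable, so every degree access is replaced by a call to \degest{} (\Cor{birthday}), costing $O(\eps^{-2}\sqrt{d_v}\log n)$; I would transfer correctness from the estimated degrees to the true ccdh using \Clm{lambda} and \Clm{cond}, which absorb the $(1\pm\eps/10)$ distortion into the final $(\eps,\eps)$ guarantee at the cost of an additive $1/n^2$ failure term. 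Summing the $\sqrt{d_v}$ overhead over the sampled edges and collapsing it through the definition of $\findex$ turns the $m/\findex^2$ term into $m/\findex$ and adds one extra $\eps^{-2}$ and one extra $\log n$ factor, yielding the stated $O((m/\findex)\eps^{-4}\log^2(n/\err))$ bound.
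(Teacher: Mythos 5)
Your overall architecture is the paper's own: the weight function whose total is $N(d)$, the head/tail split governed by $\thresh$, a Chernoff bound for the vertex estimate (\Clm{vertex}), the two-stage degree-proportional edge sampler, the $\hindex$-index controlling $r$ and the $\findex$-index controlling $q$ (\Thm{mainbound}), and median boosting plus a union bound over $D$. However, your step (i) contains a claim that would fail: you assert that the normalizer $D_S=\sum_{v\in S}d_v$ lies in $(1\pm\eps)$ of its mean once $r\ge c\eps^{-2}n/\hindex$, on the grounds that the $\hindex$-index ``bounds how much degree mass can hide in a few very-high-degree vertices.'' It does not. Neither index constrains the maximum degree or the degree mass held by a few hubs. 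Concretely, take a graph with $m=n$, with $\sqrt{n}$ vertices of degree $\sqrt{n}$ (so $\hindex=\Theta(\sqrt n)$, $\findex=\Theta(n^{1/4})$) and a single hub of degree $n/2$. Then $r=c\eps^{-2}n/\hindex=c\eps^{-2}\sqrt n$, the hub is sampled with probability $o(1)$, and conditioned on missing it $D_S$ concentrates around $r(2m-n/2)/n=\tfrac34\,\EX[D_S]$; when the hub is sampled, $D_S\ge n/2\gg \EX[D_S]$. So for any $\eps<1/4$, $D_S$ is essentially \emph{never} within $(1\pm\eps)$ of its mean, no matter how large the constant $c$ is. Any proof routed through two-sided concentration of $D_S$ cannot be completed under the theorem's hypotheses.

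The fix — and this is exactly what the paper does in \Lem{edge} — is to observe that $D_S$ cancels and therefore needs no two-sided control. Conditioned on the sampled set $R$, each edge sample satisfies $\EX[Y_i\mid R]\in(1\pm\eps/10)\,\wgt(R)/\hat{d}_R$, so the conditional expectation of the output $(n/r)(\hat{d}_R/q)\sum_i Y_i$ is $(1\pm\eps/10)(n/r)\wgt(R)$, independent of $\hat{d}_R$. Two-sided concentration is needed only for the weight $\wgt(R)=\sum_{v\in R}\wgt_{\Lambda,d}(v)$, which has the bounded range $\wgt_{\Lambda,d}(v)\le 2\aN(d)/d$ making \Thm{chernoff} applicable with $r\ge c\eps^{-2}n/d$ (\Lem{wt}); the normalizer $\hat{d}_R$ requires only a \emph{one-sided constant-factor upper bound}, obtained by Markov with probability $99/100$ (the ``sound'' set condition $\sum_{v\in R}d_v\le 100r(2m/n)$), because it enters only through the lower bound $\EX[Y_1\mid R]\ge \wgt(R)/\hat{d}_R=\Omega(\aN(d)/m)$ that feeds the Chernoff exponent for the $q$ conditionally iid draws. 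A second, smaller gap is your HDM accounting: ``summing the $\sqrt{d_v}$ overhead over the sampled edges and collapsing it through the definition of $\findex$'' does not go through directly, since the expected cost of one \degest{} call on a degree-biased sample is $\propto\sum_u d_u^{3/2}/(2m)$, which the same hub example blows up. The paper needs two additional ingredients: caching (repeated \degest{} calls on the same vertex cost no new queries) and the split into $\{v:d_v\le\findex^2\}$ and $\{v:d_v>\findex^2\}$, bounding the first part by $O(q\findex)$ and the second by $\sum_{v:d_v>\findex^2}\sqrt{d_v}\le\findex^{-1}\sum_v d_v=O(m/\findex)$.
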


Observe how a larger $h$ and $z$-index lead to smaller running times.
Ignoring constant factors and assuming $m = O(n)$, asymptotically
increasing $h$ and $z$-indices lead to sublinear algorithms.

We now describe the algorithm itself.
The main innovation in \mainalg{} comes through the use of a recent theoretical
technique to simulate edge samples through vertex samples~\cite{ELRS15,ERS17}. 
The sampling of edges occurs through two stages. In the first stage,
the algorithm samples a set of $r$ vertices and sets up a distribution over 
the sampled vertices such that any edge adjacent to a sampled vertex may be sampled with uniform probability. In the second stage, it samples $q$ edges
from this distribution.

For each edge, we compute a weight based on the degrees of its vertices and generate our estimate
by averaging these weights. Additionally, we use vertex sampling to estimate the head
of the distribution. Straightforward Chernoff bound arguments can be used to determine
when to use the vertex sampling over the edge sampling method.

The same algorithmic structure is used for the Standard Model and the Hidden Degrees Model.
The only difference is the use the algorithm of \Cor{birthday} to estimate
degrees in the HDM, while the degrees are directly available in the Standard Model.


%
%
%
%

\begin{algorithm} 
\caption{\mainalg$(D,r,q,\rep,\thresh)$ \label{alg:SADDLES} \newline 
{\bf Inputs:} \newline
\textit{$D$}: set of degrees for which N(d) is to be computed \newline
\textit{$r$}: budget for vertex samples \newline
\textit{$q$}: budget for edge samples \newline
\textit{$\rep$}: boosting parameter \newline
\textit{$\thresh$}: cutoff for vertex sampling \newline
{\bf Output:} \newline
$\{N^\prime(d)\}$: estimated $\{N(d)\}$
} \label{alg:edgest}
For $t=1, \ldots, \rep$: \\ \label{step:boost}
\ \ \ \ For $i=1, \ldots, r$: \\
\ \ \ \ \ \ \ \ Select u.a.r. vertex $v$ and add it to multiset $R$. \label{step:defS}\\
\ \ \ \ \ \ \ \ In HDM, call \degest$(v)$ to get $\hat{d_v}$. In SM, set $\hat{d_v}$ to $d_v$.\label{step:deg1}\\
\ \ \ \ \ \ \ \ For $d \in D$: \\
\ \ \ \ \ \ \ \ \ \ \ \ If $\hat{d_v} \geq d$, set $X_{id} = 1$. Else, $X_{id} = 0$. \\
\ \ \ \ Let $\hat{d}_R = \sum_{v \in R} \hat{d_v}$ and $\cD$ denote the distribution over $R$ where $v \in R$
is selected with probability $\hat{d_v}/\hat{d}_R$. \\
\ \ \ \ For $i=1, \ldots, q$: \label{step:set_q} \\
\ \ \ \ \ \ \ \ Sample $v \sim \cD$. \\
\ \ \ \ \ \ \ \ Pick u.a.r. neighbor $u$ of $v$. \\
\ \ \ \ \ \ \ \ In HDM, call \degest$(u)$ to get $\hat{d_u}$. In SM, set $\hat{d_u}$ to $d_u$. \label{step:deg2}\\
\ \ \ \ \ \ \ \ For $d \in D$: \\
\ \ \ \ \ \ \ \ \ \ \ \ If $\hat{d_u} \geq d$, set $Y_{id} = 1/\hat{d_u}$. Else, set $Y_{id} = 0$. \label{step:Yi} \\
\ \ \ \ For $d \in D$: \\
\ \ \ \ \ \ \ \ If $\sum_{i \leq r} X_{id} \geq \thresh$: \label{step:thresh} \\
\ \ \ \ \ \ \ \ \ \ \ \ $\hNd[t] = \frac{n}{r} \sum_{i \leq r} X_{id}$. \label{step:vertex} \\
\ \ \ \ \ \ \ \ else $\hNd[t]=\frac{n}{r} \cdot \frac{\hat{d}_R}{q}\sum_{i \leq q} Y_{id}$. \label{step:el_2} \label{step:def_El} \\
For $d \in D$:  \\ 
\ \ \ \ $N^\prime(d) = median\{\hNd\}$ \\
Return $\{N^\prime(d)\}$
\end{algorithm}
\paragraph{The core theoretical bound:} The central technical bound
deals with the properties of each individual estimate $\hNd[t]$.

\begin{theorem} \label{thm:mainbound} Suppose $r \geq c\eps^{-2}n/\hindex$, $q \geq c\eps^{-2} m/\findex^2$, $\thresh = c\eps^{-2}$.
Then, for all $d \in D$, with probability $\geq 5/6$, $\hNd[t] \in [(1-\eps/2)N((1+\eps/2)d), (1+\eps/2)N((1-\eps/2)d]$.
\end{theorem}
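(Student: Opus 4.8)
The plan is to analyze a single estimate $\hNd[t]$ for a fixed $d$ and a fixed good $\Lambda$ (\Def{deg}). By \Clm{cond} and \Clm{lambda} it then suffices to show that, conditioned on good $\Lambda$, $\hNd[t]\in(1\pm\eps/3)\aN(d)$ with probability $\ge 5/6$: since $\aN(d)\in[N((1+\eps/9)d),\,N((1-\eps/9)d)]$ and $N$ is non-increasing, this lands $\hNd[t]$ inside the required bicriteria window $[(1-\eps/2)N((1+\eps/2)d),\,(1+\eps/2)N((1-\eps/2)d)]$. I would treat both models uniformly through the estimated degrees $\hat{d_v}$ (which equal $d_v$ in the SM), with the $(1\pm\eps/10)$ degree-estimation factors contributing only an $O(\eps)$ multiplicative distortion of the target $\aN(d)$ that I absorb into the error budget. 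Since the algorithm outputs one of two estimators according to whether $\sum_{i\le r}X_{id}\ge\thresh$, I would bound separately the probabilities that the vertex branch is taken and inaccurate, and that the edge branch is taken and inaccurate.

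The vertex branch is the easy part. Here $\sum_i X_{id}$ is a sum of $r$ i.i.d.\ indicators with mean $\mu=r\aN(d)/n$, and arguing by cases on $\mu$ using \emph{both} tails of \Thm{chernoff}, I would show that whenever $\sum_i X_{id}\ge\thresh=c\eps^{-2}$ the estimate $\tfrac nr\sum_i X_{id}$ lies in $(1\pm\eps/4)\aN(d)$ except with probability $\exp(-\Omega(c))$; an anomalously large count forcing this branch when $\mu\ll\thresh$ is ruled out by the second (deviation-above-$2e\mu$) Chernoff tail. The same computation shows that for $d\le h$ we have $\aN(d)\ge h$ (by the definition of the $h$-index and monotonicity of $N$), hence $\mu\ge\thresh$ and the vertex branch is the one taken w.h.p. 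Thus all small $d$ are dispatched here.

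The crux is the edge estimator $E_d=\tfrac nr\cdot\tfrac{\hat{d}_R}{q}\sum_{i\le q}Y_{id}$, which I would control by a two-level (nested) concentration. The key simplification is that the normalizer cancels: conditioned on $R$, $\EX[E_d\mid R]=\tfrac nr\,\hat{d}_R\,\EX[Y_{id}\mid R]=\tfrac nr\sum_{v\in R}g(v)$, where $g(v)=\tfrac{\hat{d_v}}{d_v}\sum_{u\in\Gamma(v),\,\hat{d_u}\ge d}1/\hat{d_u}$, so there is no need to concentrate the heavy-tailed sum $\hat{d}_R$ at all. Writing $M_R$ for this conditional mean, I would first show (the outer level) that $M_R=\tfrac nr\sum_{v\in R}g(v)$ concentrates in $(1\pm\eps/8)\aN(d)$: this is a sum of $r$ i.i.d.\ terms of mean $\aN(d)/n$ and, crucially, range at most $N(d)/d$, since a vertex $v$ has at most $N(d)$ neighbors of degree $\ge d$, each contributing $\le 1/d$. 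The first Chernoff bound then requires $\eps^2\,r\,(\aN(d)/n)/(N(d)/d)\approx\eps^2 rd/n=\Omega(cd/h)$, which is $\Omega(c)$ exactly when $d\ge h$ — precisely the regime where the edge branch is relevant. Then (the inner level), conditioned on a typical $R$ (one with $M_R$ as above and, by a crude Markov bound, $\hat{d}_R\le O(rm/n)$), the sum $\sum_i Y_{id}$ is a sum of $q$ i.i.d.\ variables in $[0,1/d]$ of mean $\nu_R\gtrsim\aN(d)/m$, and Chernoff gives concentration provided $\eps^2 q\nu_R d\gtrsim\eps^2 q\,(d\,\aN(d))/m\ge\eps^2 q z^2/m=\Omega(c)$, using the defining inequality $d\cdot N(d)\ge z^2$ (\Def{findex}) together with $q\ge c\eps^{-2}m/z^2$.

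The \textbf{main obstacle} I anticipate is exactly this edge-estimator analysis: the $q$ edge samples are not independent (they share $R$ and the normalizer $\hat{d}_R$), and both $\hat{d}_R$ and the per-vertex weight $g(v)$ are heavy-tailed, so a naive one-shot Chernoff bound fails. The resolution is the cancellation of $\hat{d}_R$ into $\sum_v g(v)$ (so $\hat{d}_R$ needs only a crude Markov upper bound, never sharp concentration), the deterministic range bound $g(v)\le N(d)/d$, and the clean split of responsibilities between the two fatness indices — $h$ (through $r$) for the outer $R$-level and for covering $d\le h$, and $z$ (through $q$) for the inner edge-level via $d\,N(d)\ge z^2$. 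Finally I would collect the failure probabilities: the dominant term is the constant from the Markov bound on $\hat{d}_R$, which I would set to $1/12$, with every Chernoff term $\exp(-\Omega(c))$ made negligible by taking $c$ large, for total failure below $1/6$; the $(1\pm\eps/8)$ factors compose (with the $O(\eps)$ degree distortion) to within $(1\pm\eps/3)$, which by the reduction above closes the proof.
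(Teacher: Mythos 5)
Your proposal is correct and takes essentially the same route as the paper's proof: the same reduction via \Clm{cond} and \Clm{lambda}, the same two-tailed Chernoff treatment of the vertex branch (\Clm{vertex}), and the same two-level concentration for the edge estimator --- outer concentration of the sampled vertex weights with range bound $O(\aN(d)/d)$ (\Lem{wt}), a crude Markov bound on $\hat{d}_R$, and an inner Chernoff bound on the $Y_i$'s with range $1/d$ powered by $d\cdot N(d) \geq \findex^2$ (\Lem{edge}) --- with the $h$-index characterization (\Lem{hindex}) tying the edge branch to the regime $d = \Omega(\hindex)$. The only difference is organizational: you case directly on $d$ versus $\hindex$, whereas the paper cases on whether the vertex-sampling threshold fires and then infers $d = \Omega(\hindex)$ from \Clm{vertex} and \Lem{hindex}; this does not change the argument.
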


The proof of this theorem is the main part of our analysis, which appears in the next section.
\Thm{main} can be derived from this theorem, as we show next.

\begin{proof} (of \Thm{main}) First, let us prove the error/accuracy bound.
For a fixed $d \in D$ and $t \leq \ell$, \Thm{mainbound}
asserts that we get an accurate estimate with probability $\geq 5/6$.
Among the $\ell$ independent invocations, the probability that more
than $\ell/3$ values of $\hNd[t]$ lie outside $[(1-\eps/2)N((1+\eps/2)d), (1+\eps/2)N((1-\eps/2)d]$
is at most $\exp(-\ell/100)$ (by the Chernoff bound of \Thm{chernoff}).
By the choice of $\ell \geq c\log(n/\delta)$, the probability is at most $\delta/n$.
Thus, with probability $>1-\delta/n$, the median of $\hNd[t]$ gives an $(\eps,\eps)$
estimate of $N(d)$. By a union bound over all (at most $n$) $d \in D$,
the total probability of error over any $d$ is at most $\delta$.

Now for the query complexity.
The overall algorithm is the same for both models, involving multiple invocations of \mainalg.
The only difference is in \degest, which is trivial when degree queries are allowed.
For the Standard Model, the number of graph queries made for a single invocation of \mainalg{} 
is simply $O(\ell(r+q))$ $=O(\eps^{-2}(n/h + m/z^2)\log (n/\delta))$.
%

For the Hidden Degrees Model,
we have to account for the overhead of \Cor{birthday} for each degree estimated.
The number of queries for a single call to \degest$(d)$ is $O(\eps^{-2}\sqrt{d_v}\log n)$.
The total overhead of all calls in \Step{deg1} is $\EX[\sum_{v \in R} \sqrt{d_v} (\eps^{-2}\log n)]$.
By linearity of expectation, this is $O((\eps^{-2}\log n)r\EX[\sqrt{d_v}]$, where the expectation
is over a uniform random vertex. We can bound $r\EX[\sqrt{d_v}]$ $\leq r\EX[d_v]$ $=O(\eps^{-2}n(m/n)/h)$
$= O(\eps^{-2}n/h)$. 

The total overhead of all calls in \Step{deg2} requires more care. Note that when \degest$(v)$
is called multiple times for a fixed $v$, the subsequent calls require no further queries.
(This is because the output of the first call can be stored.) We partition the vertices
into two sets $S_0 = \{v | d_v \leq \findex^2\}$ and $S_1 = \{v | d_v > \findex^2\}$.
The total query cost of queries to $S_0$ is at most $O(q\findex) = O((\eps^{-2}\log n)m/\findex)$.
For the total cost to $S_1$, we directly bound by (ignoring the $\eps^{-2}\log n$ factor)
$\sum_{v \in S_1} \sqrt{d_v} = \sum_{v \in S_1} d_v/\sqrt{d_v}$ $\leq \findex^{-1} \sum_v d_v $ $= O(m/\findex)$.
All in all, the total query complexity is $O((\eps^{-4}\log^2n) (n/h + m/\findex))$.
Since $m \geq n$ and $\findex \leq h$, we can simplify to $O((\eps^{-4}\log^2n)(m/\findex))$.
\end{proof}

%
%
%
%

\section{Analysis of \mainalg}

We now prove \Thm{mainbound}. There are a number of intermediate claims towards that.
We will fix $d \in D$ and a choice of $t$. Abusing notation, we use $\hNd$ to refer to $\hNd[t]$.
The estimate of \Step{vertex}
can be analyzed with a direct Chernoff bound.

\begin{claim} \label{clm:vertex} The following holds with probability $> 9/10$.
If \mainalg$(r,q)$ outputs an estimate in \Step{vertex} for a given $d$,
then $\hNd \in (1 \pm \eps/10) \aN(d)$. If it does not output in \Step{vertex},
then $\aN(d) < (2c/\eps^2)(n/r)$.
\end{claim}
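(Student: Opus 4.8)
The plan is to reduce both implications to concentration statements about a single binomial sum, and then to \emph{avoid} conditioning on the random event that the vertex branch (\Step{vertex}) fires. Write $S = \sum_{i \le r} X_{id}$. For a fixed realization of the degree estimates (fixed $\Lambda$), the $r$ sampled vertices are drawn u.a.r.\ and independently, so the $X_{id}$ are i.i.d.\ Bernoulli with mean $p = \aN(d)/n$; hence $\mu_S \eqdef \EX[S] = r\,\aN(d)/n$ and $\hNd = (n/r)S$ is an unbiased estimator of $\aN(d)$. With $\thresh = c\eps^{-2}$, the first assertion ``$\hNd \in (1\pm\eps/10)\aN(d)$'' is exactly ``$S \in (1\pm\eps/10)\mu_S$'', the branch fires iff $S \ge \thresh$, and the conclusion of the second assertion ``$\aN(d) < (2c/\eps^2)(n/r)$'' is exactly ``$\mu_S < 2\thresh$''. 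The key idea is to case-split on the \emph{deterministic} quantity $\mu_S$, not on the random firing event.

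First I would treat the regime $\mu_S \ge \thresh/(2e)$, where $\mu_S = \Omega(c/\eps^2)$ is large. The multiplicative bound of \Thm{chernoff} (first bullet, $B=1$, deviation $\eps/10$) gives $\Pr[\,|S-\mu_S| > (\eps/10)\mu_S\,] \le 2\exp(-\eps^2\mu_S/300) \le 2\exp(-c/(600e))$, which is below $1/10$ for $c$ large. On this good event $S \in (1\pm\eps/10)\mu_S$ holds \emph{unconditionally}, which is precisely the conclusion of the first assertion whenever the branch fires. It also settles the second assertion: if moreover $\mu_S \ge 2\thresh$ then, using $1-\eps/10 \ge 9/10$, we get $S \ge (1-\eps/10)\mu_S \ge (1-\eps/10)\cdot 2\thresh > \thresh$, so the branch fires and the contrapositive of the second assertion holds; and if instead $\thresh/(2e) \le \mu_S < 2\thresh$ the conclusion $\mu_S < 2\thresh$ holds outright.

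Next I would treat the complementary regime $\mu_S < \thresh/(2e)$, the delicate case. Here $\mu_S < 2\thresh$ is automatic, so the second assertion is free and only the (vacuous-if-the-branch-does-not-fire) first assertion is at issue; it therefore suffices to show the branch almost surely does not fire. Since $\mu_S < \thresh/(2e)$ is equivalent to $\thresh \ge 2e\,\mu_S$, the hypothesis $t \ge 2e\mu$ of \Thm{chernoff} (second bullet) holds with $t = \thresh/r$ and $\mu = \mu_S/r$, yielding $\Pr[S \ge \thresh] \le 2^{-\thresh} = 2^{-c/\eps^2} < 1/10$ for $c$ large. Thus with probability $>9/10$ we have $S < \thresh$, the branch does not fire, and the first assertion holds vacuously. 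Taking the worse of the two regimes' failure probabilities gives both assertions simultaneously with probability $>9/10$.

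The main obstacle is exactly this small-$\mu_S$ regime: a relative-error (multiplicative) Chernoff bound gives no useful concentration when $\mu_S$ is tiny, so one cannot control $\hNd$ directly, and conditioning on the (upward-biased) event $S \ge \thresh$ would break the unbiasedness. The resolution is the asymmetric treatment above—using the absolute upper-tail bound to rule out the vertex branch firing at all, rather than trying to bound the estimate's relative error—so that the claim holds vacuously precisely where accuracy cannot be guaranteed.
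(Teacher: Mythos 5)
Your proof is correct and takes essentially the same approach as the paper: both case-split on the deterministic size of $\aN(d)$ (equivalently $\mu_S$), apply the multiplicative Chernoff bound in the large-mean regime and the absolute upper-tail bound in the small-mean regime to show the vertex branch cannot fire, and obtain the second assertion via its contrapositive. The only differences are cosmetic (your cutoff $\thresh/(2e)$ versus the paper's $\thresh/10$, and slightly different failure-probability bookkeeping).
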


\begin{proof} Each $X_i$ is an iid Bernoulli random variable, with success probability precisely $\aN(d)/n$.
We split into two cases.

Case 1: $\aN(d) \geq (c/10\eps^2)(n/r)$. By the Chernoff bound of \Thm{chernoff},
$\Pr[|\sum_{i \leq r} X_i - r \aN(d)/n| \geq (\eps/10)(r \aN(d)/n)]$
$\leq 2\exp(-(\eps^2/100) (r\aN(d)/n)$ $\leq 1/100$.

Case 2: $\aN(d) \leq (c/10\eps^2)(n/r)$. Note that $\EX[\sum_{i \leq r} X_i] \leq c/10\eps^2$ $\leq (c /\eps^2)/2e$.
By the upper tail bound of \Thm{chernoff}, $\Pr[\sum_{i \leq r} X_i \geq c/\eps^2] < 1/100$.

Thus, with probability at least $99/100$, if an estimate is output in \Step{vertex},
$\aN(d) > (c/10\eps^2)(n/r)$. By the first case, with probability at least $99/100$,
$\hNd$ is a $(1+\eps/10)$-estimate for $\aN(d)$. 
A union bound completes the first part.

Furthermore, if $\aN(d) \geq (2c/\eps^2)(n/r)$,
then with probability at least $99/100$, $\sum_{i \leq r} X_i \geq (1-\eps/10) r \aN(d)/n$ $\geq c/\eps^2 = \thresh$.
A union bound proves (the contrapositive of) the second part.
\end{proof}

We define weights of \emph{ordered} edges. The weight only depends
on the second member in the pair, but allows for a more convenient analysis.
The weight of $\langle v, u\rangle$ is the random variable $Y_i$ of \Step{Yi}.

\begin{definition} \label{def:weights}
The $d$-weight of an ordered edge $\langle v,u \rangle$ for a given $\Lambda$ (the randomness of \degest)
is defined as follows. We set $\wgt_{\Lambda,d}(\langle v,u\rangle)$ to be $1/\hat{d_u}$
if $\hat{d_u} \geq d$, and zero otherwise. For vertex $v$, $\wgt_{\Lambda,d}(v) = \sum_{u \in \Gamma(v)} \wgt_{\Lambda,d}(\langle v,u\rangle)$.
\end{definition}

The utility of the weight definition is captured by the following claim. The total weight is an approximation of $\hNd$, and thus, we can analyze how well \mainalg{} approximates the total weight.

\begin{claim} \label{clm:wt} If $\Lambda$ is good, $\sum_{v \in V} \wgt_{\Lambda,d}(v) \in (1\pm\eps/9) \widehat{N_\Lambda}(d)$.
\end{claim}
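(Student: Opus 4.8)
The plan is to turn the vertex-indexed double sum into a sum over ordered edges and then regroup by the \emph{second} endpoint, where all the weight resides. First I would expand, directly from \Def{weights},
$$\sum_{v \in V} \wgt_{\Lambda,d}(v) = \sum_{v \in V} \sum_{u \in \Gamma(v)} \wgt_{\Lambda,d}(\langle v,u\rangle),$$
which is exactly a sum over all ordered edges $\langle v,u\rangle$. The term $\wgt_{\Lambda,d}(\langle v,u\rangle)$ depends only on the second coordinate $u$: it equals $1/\hat{d_u}$ when $\hat{d_u} \geq d$ and is $0$ otherwise. The key combinatorial observation is that a fixed vertex $u$ occurs as the second coordinate of exactly $d_u$ ordered edges (one per neighbor $v \in \Gamma(u)$). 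Grouping the sum by $u$ therefore collapses it to
$$\sum_{v \in V} \wgt_{\Lambda,d}(v) = \sum_{u:\, \hat{d_u} \geq d} d_u \cdot \frac{1}{\hat{d_u}} = \sum_{u:\, \hat{d_u} \geq d} \frac{d_u}{\hat{d_u}}.$$

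Next I would use the goodness of $\Lambda$ to control each surviving ratio. By \Def{deg}, goodness gives $\hat{d_u} \in (1\pm\eps/10)d_u$ for every $u$, so $d_u/\hat{d_u} \in [\,1/(1+\eps/10),\ 1/(1-\eps/10)\,]$. A routine estimate (using $\eps < 1$, so that $1-\eps/10 > 9/10$) shows $1/(1-\eps/10) \leq 1+\eps/9$ and $1/(1+\eps/10) \geq 1-\eps/10 \geq 1-\eps/9$, whence each term satisfies $d_u/\hat{d_u} \in (1\pm\eps/9)$. This is precisely where the slack between the $\eps/10$ of the degree estimate and the $\eps/9$ of the claim is consumed.

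Finally I would sum these per-term bounds over the index set $\{u : \hat{d_u} \geq d\}$, whose cardinality is $\widehat{N_\Lambda}(d)$ by definition. Since every term lies in the common multiplicative window $(1\pm\eps/9)$, so does their average, and hence the total sum lies in $(1\pm\eps/9)\,\widehat{N_\Lambda}(d)$, giving the claim. There is no substantive obstacle here: the only point needing care is the bookkeeping in the regrouping step — verifying that the weight attaches to the second endpoint and that each vertex $u$ accumulates exactly $d_u$ equal-weight contributions — after which the statement reduces to a termwise $(1\pm\eps/9)$ comparison.
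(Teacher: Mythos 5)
Your proof is correct and follows essentially the same route as the paper's: regroup the double sum over ordered edges by the second endpoint to obtain $\sum_{u:\,\hat{d_u}\geq d} d_u/\hat{d_u}$, then apply goodness of $\Lambda$ termwise to place each ratio in $(1\pm\eps/9)$. Your explicit verification that $1/(1-\eps/10)\leq 1+\eps/9$ is a small detail the paper leaves implicit, but the argument is otherwise identical.
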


\begin{proof}
\begin{eqnarray} 
\sum_{v \in V} \wgt_{\Lambda,d}(v) & = & \sum_{v \in V} \sum_{u \in \Gamma(v)} \bone_{\hat{d_u} \geq d}/\hat{d_u} \nonumber \\
& = & \sum_{u: \hat{d_u} \geq d} \sum_{v \in \Gamma(u)} 1/\hat{d_u}
= \sum_{u: \hat{d_u} \geq d} d_u/\hat{d_u} \label{eqn:wgt}
\end{eqnarray}

Since $\Lambda$ is good, $\forall u, \hat{d_u} \in (1\pm \eps/10) d_u$,
and $d_u/\hat{d_u} \in (1 \pm \eps/9)$. 
Applying in \Eqn{wgt}, $\sum_{v \in V} \wgt_{\Lambda,d}(v)$
$\in (1\pm\eps/9) \widehat{N_\Lambda}(d)$.
\end{proof}

We come to an important lemma, that shows that the weight of the random subset $R$
(chosen in \Step{defS}) is well-concentrated. This is proven using a Chernoff bound,
but we need to bound the maximum possible weight to get a good bound on $r = |R|$.

\begin{lemma}\label{lem:wt} Fix any good $\Lambda$ and $d$. Suppose $r \geq c\eps^{-2}n/d$.
With probability at least $9/10$, 
$\sum_{v \in R}\wgt_{\Lambda,d}(v) \in (1\pm\eps/8) (r/n)\aN(d)$.
\end{lemma}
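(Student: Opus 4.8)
The plan is to view $\sum_{v\in R}\wgt_{\Lambda,d}(v)$ as a sum of iid random variables and apply the first (rescaled) Chernoff bound of \Thm{chernoff}. Since $R$ is a multiset of $r$ vertices drawn u.a.r.\ (with replacement) in \Step{defS}, writing $W_i \eqdef \wgt_{\Lambda,d}(v_i)$ for the $i$-th sampled vertex gives $r$ iid copies of a single random variable. First I would compute the common expectation $\mu \eqdef \EX[W_i] = \tfrac1n\sum_{v\in V}\wgt_{\Lambda,d}(v)$. Because $\Lambda$ is good, \Clm{wt} gives $\sum_{v\in V}\wgt_{\Lambda,d}(v)\in(1\pm\eps/9)\aN(d)$, so $\mu\in(1\pm\eps/9)\aN(d)/n$. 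Hence the target value $(r/n)\aN(d)$ equals $\mu r$ up to a $(1\pm\eps/9)$ factor, and it remains only to show that $\sum_i W_i$ concentrates around $\mu r$.

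To invoke the Chernoff bound I need an upper bound $B$ on each $W_i$, and the quality of this bound is exactly what dictates the required size of $r$; this is the one genuine obstacle. The naive estimate $\wgt_{\Lambda,d}(v)=\sum_{u\in\Gamma(v):\,\hat{d_u}\ge d}1/\hat{d_u}\le d_v/d$ is useless, since $d_v$ can be as large as $\Omega(n)$. The key observation I would use is that the number of neighbors $u$ of $v$ with $\hat{d_u}\ge d$ is at most the \emph{total} number of vertices in the graph with $\hat{d_u}\ge d$, which is precisely $\aN(d)$; and each surviving term is at most $1/d$ since $\hat{d_u}\ge d$. This yields the much stronger bound
\[
\wgt_{\Lambda,d}(v)\ \le\ \frac{\min(d_v,\aN(d))}{d}\ \le\ \frac{\aN(d)}{d}\ \eqdef\ B .
\]

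Finally I would apply \Thm{chernoff} with a slack $\eps'=\Theta(\eps)$ small enough that $(1\pm\eps')(1\pm\eps/9)\subseteq(1\pm\eps/8)$ (for instance $\eps'=\eps/100$, using $\eps/8-\eps/9=\eps/72$). The failure probability is $2\exp(-(\eps')^2\mu r/(3B))$, and substituting $\mu\ge\tfrac12\aN(d)/n$ together with $B=\aN(d)/d$ makes the exponent of order $(\eps')^2 rd/n$. Since $r\ge c\eps^{-2}n/d$, this exponent is $\Omega(c)$, so for $c$ a sufficiently large constant the failure probability is at most $1/10$. Thus with probability at least $9/10$ we have $\sum_{v\in R}W_i\in(1\pm\eps')\mu r\subseteq(1\pm\eps/8)(r/n)\aN(d)$, as claimed. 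The only delicate point is the max-weight bound $B\le\aN(d)/d$; the remainder is a routine Chernoff computation.
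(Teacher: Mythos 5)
Your proof is correct and follows essentially the same route as the paper's: linearity of expectation combined with \Clm{wt} to pin down $\EX[\sum_{v\in R}\wgt_{\Lambda,d}(v)]$, the key max-weight bound $\wgt_{\Lambda,d}(v) = O(\aN(d)/d)$ (at most $\aN(d)$ nonzero terms, each of size $O(1/d)$), and the Chernoff bound of \Thm{chernoff} using $r \geq c\eps^{-2}n/d$. If anything, your choice of slack $\eps' = \eps/100$ is more careful than the paper's $\eps/20$, since $\eps/20 + \eps/9 > \eps/8$ and so the paper's constants do not quite compose, whereas yours do.
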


\begin{proof} Let $\wgt(R)$ denote $\sum_{v \in R} \wgt_{\Lambda,d}(v)$.
By linearity of expectation, $\EX[\wgt(R)] $ $= (r/n) \cdot $  $\sum_{v \in V} \wgt_{\Lambda,d}(v)$
$ \geq (r/2n) \aN(d)$. 
To apply the Chernoff bound, we need to bound the maximum
weight of a vertex. For good $\Lambda$, the weight $\wgt_{\Lambda,d}$ of
any ordered pair is at most $1/(1-\eps/10)d \leq 2/d$.
The number of neighbors of $v$ such that $\hat{d}_u \geq d$
is at most $\widehat{N_\Lambda}(d)$. Thus,
$\wgt_{\Lambda,d}(v) \leq 2 \widehat{N_\Lambda}(d)/d$.
 
By the Chernoff bound of \Thm{chernoff} and setting $r \geq c \eps^{-2}n/d$,
\begin{align*}
    & Pr\left[ \left\lvert \wgt(R) - \EX[\wgt(R)] \right\rvert > (\eps/20)\EX[\wgt(R)] \right] \\
    & < 2\exp\left(- \frac{\eps^2 \cdot (c \eps^{-2}n/d) \cdot (\aN(d)/2n) }{ 3 \cdot 20^2 \cdot 2\aN(d)/d}\right)  \leq 1/10
\end{align*} 
With probability at least $9/10$, $\wgt(R) \in (1\pm \eps/20)\EX[\wgt(R)]$.
By the arguments given above, $\EX[\wgt(R)] \in (1 \pm \eps/9) (r/n) \aN(d)$. We combine to complete the proof.
\end{proof}
   
Now, we determine the number of edge samples required to estimate the weight $\wgt_{\Lambda,d}(R)$.
    
\begin{lemma}\label{lem:edge}
    Let $\hNd$ be as defined in Step~\ref{step:el_2} of \  \mainalg. Assume $\Lambda$ is good, $r \geq c \eps^{-2}n/d$,
    and $q \geq c \eps^{-2}m/(d\aN(d))$. Then, with probability $> 7/8$, $\hNd \in (1\pm \eps/4)\aN(d)$.
\end{lemma}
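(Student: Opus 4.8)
We want to show that the edge-sampling estimate $\hNd$ from Step~\ref{step:el_2} concentrates around $\aN(d)$. Recall that $\hNd = \frac{n}{r}\cdot\frac{\hat{d}_R}{q}\sum_{i\leq q} Y_{id}$, where each $Y_{id}$ is the $d$-weight of an ordered edge $\langle v,u\rangle$ obtained by sampling $v\sim\cD$ (proportional to $\hat{d_v}/\hat{d}_R$ over $R$) and then a u.a.r.\ neighbor $u$. The first thing I would do is compute the expectation of a single $Y_{id}$, \emph{conditioned on the fixed multiset $R$ and good $\Lambda$}. Sampling $v$ with probability $\hat{d_v}/\hat{d}_R$ and then a uniform neighbor means the ordered edge $\langle v,u\rangle$ is drawn with probability proportional to $(\hat{d_v}/\hat{d}_R)\cdot(1/d_v)$; since $\hat{d_v}\in(1\pm\eps/10)d_v$ for good $\Lambda$, this is essentially uniform over edges adjacent to $R$ (up to $(1\pm\eps/9)$ distortion). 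Carrying this through, $\EX[Y_{id}\mid R] = \frac{1}{\hat{d}_R}\wgt_{\Lambda,d}(R)(1\pm\eps/9)$, so that $\EX[\hNd\mid R] = \frac{n}{r}\wgt_{\Lambda,d}(R)(1\pm\eps/9)$. By \Lem{wt}, $\wgt_{\Lambda,d}(R)\in(1\pm\eps/8)(r/n)\aN(d)$ with probability $\geq 9/10$, so conditionally the expectation of $\hNd$ is within a $(1\pm O(\eps))$ factor of $\aN(d)$.

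**Concentration.** The next step is a Chernoff bound (\Thm{chernoff}) on $\sum_{i\leq q}Y_{id}$, which are iid given $R$. The scaling factor is $\frac{n}{r}\cdot\frac{\hat{d}_R}{q}$, so I need to express the deviation bound in terms of $\aN(d)$. The relevant quantities are the conditional mean $\mu \triangleq \EX[Y_{id}\mid R]\approx \wgt_{\Lambda,d}(R)/\hat{d}_R$ and the range bound $B$ on $Y_{id}$. Since each $Y_{id}$ equals $1/\hat{d_u}$ when $\hat{d_u}\geq d$ and $0$ otherwise, for good $\Lambda$ we have $Y_{id}\leq 1/((1-\eps/10)d)\leq 2/d$, giving $B=2/d$. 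The Chernoff exponent is $-\eps^2\mu q/(3B)$ up to constants; substituting $\mu\approx \wgt_{\Lambda,d}(R)/\hat{d}_R \geq (r/2n)\aN(d)/\hat{d}_R$ and $B=2/d$, and using $\hat{d}_R\approx (r/n)m$ (since $\hat{d}_R=\sum_{v\in R}\hat{d_v}$ is $r/n$ times the total degree $2m$, up to $(1\pm\eps/10)$), the exponent becomes of order $-\eps^2\, q\, d\,\aN(d)/m$. The hypothesis $q\geq c\eps^{-2}m/(d\aN(d))$ is exactly what makes this exponent a large negative constant, yielding failure probability $\leq 1/8$ (or less). This is the step where the budget $q$ must be tied precisely to $d\cdot\aN(d)$, and it is the crux of why the $z$-index enters: summing the per-$d$ cost will later involve $\min_d d\cdot N(d)=z^2$.

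**Combining.** I would then combine the two sources of error by a union bound. With probability $\geq 9/10$ (over the choice of $R$) \Lem{wt} holds, and conditioned on any such good $R$, with probability $\geq 7/8$ (really I would aim for a slightly better constant to absorb the union) the Chernoff bound above holds. Chaining the multiplicative factors — the $(1\pm\eps/9)$ from the conditional-expectation computation, the $(1\pm\eps/8)$ from \Lem{wt}, and the $(1\pm\eps/20)$-type deviation from Chernoff — and bounding their product by $(1\pm\eps/4)$ (valid since $\eps/9+\eps/8+\eps/20 < \eps/4$ for small $\eps$, after the standard first-order expansion), gives $\hNd\in(1\pm\eps/4)\aN(d)$ with the claimed probability. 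The total failure probability is at most $1/10+1/8<1/4$, which I can tighten by choosing the constant $c$ large enough in the Chernoff exponent so that the conditional failure is well below $1/8$, leaving headroom to reach the stated $>7/8$.

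**Main obstacle.** The subtle part is the correlation structure: the $q$ edges are iid \emph{only after conditioning on $R$}, so I must keep the two randomness sources cleanly separated — first fix $R$ (and condition on \Lem{wt} succeeding), then apply Chernoff over the edge samples. The genuinely delicate calculation is verifying that the range bound $B=2/d$ together with $\mu$ of order $(r/n)\aN(d)/\hat{d}_R$ makes the Chernoff exponent proportional to $q\,d\,\aN(d)/m$; this requires correctly substituting $\hat{d}_R\approx 2rm/n$ and tracking that the scaling $n\hat{d}_R/(rq)$ in front of $\sum Y_{id}$ converts the relative deviation on $\sum Y_{id}$ directly into a relative deviation on the estimate of $\aN(d)$. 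Getting that substitution right — so that the hypothesis on $q$ is exactly the threshold for constant failure probability — is where I expect to spend the most care.
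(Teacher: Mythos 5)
Your overall route is the same as the paper's: condition on $R$, compute $\EX[Y_{id}\mid R]\approx \wgt_{\Lambda,d}(R)/\hat{d}_R$, invoke \Lem{wt} for the weight of $R$, note that the prefactor $(n/r)(\hat{d}_R/q)$ makes $\hat{d}_R$ cancel in the conditional expectation of the estimate, and then apply the Chernoff bound of \Thm{chernoff} with range bound $O(1/d)$ to the conditionally iid $Y_{id}$'s. However, there is one genuine gap: your substitution $\hat{d}_R\approx 2rm/n$ ``up to $(1\pm\eps/10)$'' is false. The $(1\pm\eps/10)$ factor you cite comes from good $\Lambda$ and only relates each $\hat{d}_v$ to the true $d_v$; it says nothing about how $\sum_{v\in R} d_v$ compares to its expectation $r\cdot(2m/n)$. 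In a heavy-tailed graph this sum does \emph{not} concentrate: a single sampled high-degree vertex can inflate it by an arbitrarily large factor, and no Chernoff-type bound applies because the summands are unbounded. Asserting concentration of a uniformly sampled degree sum is precisely the kind of step that fails in this setting (it is the reason the whole problem is hard), so as written your lower bound on the Chernoff exponent is unjustified.

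The gap is repairable, and the repair is exactly what the paper does. Observe that you only need an \emph{upper} bound on $\hat{d}_R$ (a lower bound on $\mu=\EX[Y_{id}\mid R]$), and only up to a constant factor, since any constant is absorbed into $c$. Since $\EX[\sum_{v\in R}d_v]=r(2m/n)$, Markov's inequality gives $\sum_{v\in R}d_v\le 100\, r(2m/n)$ with probability $\ge 99/100$; the paper folds this event, together with the conclusion of \Lem{wt}, into a single notion of a ``sound'' $R$ (probability $\ge 8/9$ by a union bound), and then runs your Chernoff argument conditioned on soundness, obtaining $\EX[Y_{id}\mid R]\ge \aN(d)/800m$ and an exponent of order $c$ under the hypothesis $q\ge c\eps^{-2}m/(d\aN(d))$. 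With this fix your accounting still closes, but note it also adds a third term to your union bound (failure $\le 1/10+1/100+1/100$), so your final probability bookkeeping must be redone with the Chernoff failure pushed to $1/100$ rather than ``just below $1/8$'' in order to clear the stated $7/8$.
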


\begin{proof} We define the random set $R$ selected in \Step{defS} to be \emph{sound}
if the following hold. (1) $\wgt(R) = \sum_{v \in R} \wgt_{\Lambda,d}(v)$
$\in (1 \pm \eps/8) (r/n) \aN(d)$ and (2) $\sum_{v \in R} d_v \leq 100 r(2m/n)$.
By \Lem{wt}, the first holds with probability $> 9/10$. Observe that
$\EX[\sum_{v \in R} d_v] = r(2m/n)$, since $2m/n$ is the average degree.
By the Markov bound, the second holds with probability $> 99/100$.
By the union bound, $R$ is sound with probability at least $1-(1/10 + 1/100) > 8/9$.

Fix a sound $R$. Recall $Y_i$ from \Step{Yi}. The expectation of $Y_i|R$
is $\sum_{v \in R} \Pr[\textrm{$v$ is selected}] \cdot $ $\sum_{u \in \Gamma(v)} \Pr[\textrm{$u$ is selected}] \wgt_{\Lambda,d}(\langle v,u\rangle)$.
We plug in the probability values, and observe that 
for good $\Lambda$, for all $v$, $\hat{d}_v/d_v \in (1\pm \eps/10)$.
\begin{eqnarray}
    \EX[Y_i | R] & = & \sum_{v \in R} (\hat{d}_v/\hat{d}_R) \sum_{u \in \Gamma(v)} (1/d_v) \wgt_{\Lambda,d}(\langle v,u\rangle)  \nonumber \\
    & = & (1/\hat{d}_R) \sum_{v \in R} (\hat{d}_v/d_v)\sum_{u \in \Gamma(v)} \wgt_{\Lambda, d}(\langle v,u\rangle) \nonumber \\
    & \in & (1 \pm \eps/10) (1/\hat{d}_R) \sum_{v \in R} \sum_{u \in \Gamma(v)} \wgt_{\Lambda, d}(\langle v,u\rangle) \nonumber \\
    & \in & (1 \pm \eps/10) (\wgt(R)/\hat{d}_R) \label{eqn:Yi-exp}
\end{eqnarray}
Note that $\hNd = (n/r) (\hat{d}_R/q) \sum_{i \leq q} Y_i$ and $(n/r) (\hat{d}_R/q) \EX[\sum_{i \leq q} Y_i|R]$
$\in (1\pm\eps/10) (n/r) \wgt(R)$. 
Since $R$ is sound, the latter is in $(1\pm\eps/4) \aN(d)$.
Also, note that 
\begin{equation}
\EX[Y_i | R] = \EX[{Y_1} | R] \geq \frac{q\wgt(R)}{2\hat{d}_R} \geq \frac{(r/n) \aN(d)}{4(100r(2m/n)}
= \frac{\aN(d)}{800m} \label{eqn:Yi}
\end{equation}
By linearity of expectation, $\EX[\sum_{i \leq q} Y_i | R] = q\EX[{Y_1}|R]$. Observe that $Y_i \leq 1/d$.
We can apply the Chernoff bound of \Thm{chernoff} to the iid random variables $(Y_i | R)$.
\begin{eqnarray}
& & \Pr[|\sum_i Y_i - \EX[\sum_i Y_i]| > (\eps/100)\EX[\sum_i Y_i] | R] \nonumber \\
& \leq & 2\exp\Big(-\frac{\eps^2}{3\cdot100^2} \cdot d \cdot q\EX[{Y_1} | R] \Big)
\label{eqn:edge}
\end{eqnarray}
We use \Eqn{Yi} to bound the (positive) term in the exponent is at least
$$ \frac{\eps^2}{3\cdot100^2} \cdot \frac{c\eps^{-2}m}{\aN(d)} \cdot \frac{\aN(d)}{800m} \geq 10. $$
Thus, if $R$ is sound, the following bound holds with probability at least $0.99$.
We also apply \Eqn{Yi-exp}.
\begin{eqnarray*}
\aN(d) & = & (n/r)(\hat{d}_R/q) \sum_{i=1}^q Y_i \\
&\in & (1\pm\eps/100) (n/r)(\hat{d}_R/q) q\EX[Y_i | R] \\
& \in & (1\pm\eps/100)(1\pm\eps/10) (n/r) \wgt(R) \in (1\pm \eps/4) \hNd
\end{eqnarray*}
The probability that $R$ is sound is at least $8/9$. A union bound completes the proof.
\end{proof}

The bounds on $r$ and $q$ in \Lem{edge} depend on the degree $d$. We now bring in the $h$ and $z$-indices
to derive bounds that hold for all $d$. We also remove the conditioning over a good $\Lambda$.

\begin{proof} (of \Thm{mainbound}) We will first assume that $\Lambda$ is good. 
By \Clm{lambda}, $\aN(d) \in [N((1+\eps/9)d, N((1-\eps/9)d)]$.

Suppose $\aN(d) = 0$, so there are no vertices with $\hat{d}_v \geq d$. By the bound above,
$N((1+\eps/9)d) = 0$, implying that $N((1+\eps/2)d) = 0$.
Furthermore $\hNd = 0$,
since the random variables $X_i$ and $Y_i$ in \mainalg{} can never be non-zero. Thus,
$\hNd = N((1+\eps/2)d)$, completing the proof.
%

We now assume that $\aN(d) > 0$. We split into two cases, depending on whether \Step{vertex} outputs 
or not. By \Clm{vertex}, with probability $> 9/10$, if \Step{vertex} outputs,
then $\hNd \in (1\pm \eps/9)\aN(d)$. 
By combining these bounds, the desired bound on $\hNd$ holds with probability $> 9/10$, conditioned on a good $\Lambda$.

Henceforth, we focus on the case that \Step{vertex} does not output. By \Clm{vertex},
$\aN(d) < 2c\eps^{-2}(n/r)$. By the choice of $r$ and \Clm{lambda}, $\aN((1+\eps/9)d) < \hindex$.
By the characterization of $\hindex$ of \Lem{hindex}, $\findex^2 \leq \max(\aN((1+\eps/9)d), (1+\eps/9)d) = (1+\eps/9)d$.
This implies that $r \geq c\eps^{-2}n/d$. By the definition of $\findex$, 
$\findex^2 \leq N(\min(d_{max},(1+\eps/9)d))\cdot \min(d_{max},(1+\eps/9)d)$.
By the \Clm{lambda} bound in the first paragraph, $\aN(d) \geq N((1+\eps/9)d)$. 
Since $\aN(d) > 0$, $\aN(d) \geq \aN(d_{max})$. Thus, $\findex^2 \leq \aN(d)\cdot(1+\eps/9)d$. 
%
and hence, $m \leq c\eps^{-2}m/(d\aN(d))$. The parameters satisfy the conditions in \Lem{edge}.
With probability $>7/8$, $\hNd \in (1\pm\eps/4)\aN(d)$, and by \Clm{lambda}, $\hNd$ has the desired
accuracy.

All in all, assuming $\Lambda$ is good, with probability at least $7/8$, $\hNd$ has the desired accuracy.
The conditioning on a good $\Lambda$ is removed by \Clm{cond} to complete the proof. 
\end{proof}

\section{Experimental Results} \label{sec:results}
We implemented our algorithm in C++ and performed our experiments on a MacBook Pro laptop with 2.7 GHz Intel Core i5 with 8 GB RAM. We performed our experiments on a collection of graphs from
SNAP~\cite{SNAP}, including social networks, web networks, and infrastructure networks.
The graphs typically have millions of edges, with the largest
having more than 100M edges. Basic properties of these graphs are presented in Table~\ref{tab:GraphStats}. 
We ignore direction  and treat all edges as undirected edges. 

\subsection{Implementation Details} \label{sec:implement}
For the HDM, we explicitly describe the procedure \degest$(v)$, which estimates  the degree of a given vertex $(v)$. 
\vspace{-10pt}
\begin{algorithm} 
\caption{\degest$(v)$} \label{alg:degest}
(Initialize $S = \emptyset$.) Repeatedly add u.a.r. vertex to $S$, until the number of pair-wise collisions
is at least $k = 25$. \\
Output ${|S|\choose 2}/k$ as estimate $\hat{d}_v$.
\end{algorithm}
In the algorithm \degest, a ``pair-wise collision" refers to a pair of neighbor samples that
yield the same vertex.
The expected number of pair-wise collisions is
${|S|\choose 2}/d_v$. We simply reverse engineer that inequality to get the estimate $\hat{d}_v$.
Ron and Tsur essentially prove that this estimate has low variance~\cite{RT16}.

\textbf{Setting the parameter values.} The boosting parameter $\rep$ is simply set to $1$.
(In some sense, we only introduced the median boosting for the theoretical union bound.
In practice, convergence is much more rapid that predicted by the Chernoff bound.)

The threshold $\thresh$ is set to $100$.
The parameters $r$ and $q$
are chosen to be typically around $0.005n$.
These are not ``sublinear" per se, but are an order of magnitude smaller than the queries
made in existing graph sampling results (more discussion in next section).

We set $D = \{\lfloor 1.1^i\rfloor\}$, since that gives a sufficiently fine-grained approximation
at all scales of the degree distribution.

%
%
Code for all experiments is available \href{https://sjain12@bitbucket.org/sjain12/saddles.git}{here}\footnote{https://sjain12@bitbucket.org/sjain12/saddles.git}.


\begin{table*}[]
\centering
\caption{\textbf{Graph properties}: \#vertices (n), \#edges (m), maximum degree, $h$-index and $z$-index. The last column indicates the median number of samples over 100 runs (as a percentage of $m$) required by \mainalg{} under HDM, with $r+q=0.01n$.}
\label{tab:GraphStats}

\begin{adjustbox}{max width=\textwidth}
\begin{tabular}{|l|l|l|r|r|r|r|c|}
\hline
 \multicolumn{1}{|c|}{}  &\multicolumn{1}{|c|}{}  &\multicolumn{1}{|c|}{}  & \multicolumn{1}{|c|}{\textbf{max.}}  &\multicolumn{1}{|c|}{\textbf{avg.}}  &\multicolumn{1}{|c|}{}  &\multicolumn{1}{|c|}{}  &\multicolumn{1}{|c|}{\textbf{Perc. edge } }  \\
 \textbf{graph}         & \textbf{\#vertices}     & \textbf{\#edges}        & \textbf{degree}  & \textbf{degree}       & \textbf{H-index}   & \textbf{Z-index} & \textbf{ samples for HDM}  \\ \cline{7-7}
\hline
loc-gowalla            & 1.97E+05              & 9.50E+05               & 14730    & 4.8      &   275      & 101  & 7.0        \\
web-Stanford           & 2.82E+05              & 1.99E+06          & 38625      & 7.0        &  427    & 148    & 6.4   \\
com-youtube            & 1.13E+06              & 2.99E+06                  & 28754    & 2.6    &   547       &121    & 11.7         \\
web-Google             & 8.76E+05              & 4.32E+06                 & 6332      & 4.9     &  419    & 73   & 6.2     \\
web-BerkStan           & 6.85E+05              & 6.65E+06                & 84230     & 9.7     &  707   & 220  & 5.5   \\
wiki-Talk               &  2.39E+06                 &  9.32E+06             &  100029      & 3.9           &  1055      & 180       & 8.5  \\
as-skitter             & 1.70E+06              & 1.11E+07             & 35455      & 6.5       & 982       &184     & 6.7     \\
cit-Patents            & 3.77E+06              & 1.65E+07               & 793      & 4.3      &  237     & 28      & 5.6   \\
com-lj                 & 4.00E+06              & 3.47E+07                  & 14815     & 8.6      &  810       & 114  & 4.7  \\
soc-LiveJournal1    &  4.85E+06                 &  8.57E+07         & 20333        & 17.7        &   989      &124      & 2.4 \\
com-orkut              & 3.07E+06              & 1.17E+08                  & 33313   & 38.1      &   1638    & 172    & 2.0     \\  
\hline
\end{tabular}
\end{adjustbox}
\end{table*}

\subsection{Evaluation  of \mainalg{}} \label{sec:eval}


{\bf Accuracy over all graphs:} We run \mainalg{} with the parameters discussed above for a variety of graphs. 
Because of space considerations, we do not show results for all graphs in this version.
(We discovered the results to be consistent among all our experiments.)
\Fig{intro} show the results for the SM for some graphs in \Tab{GraphStats}.
For all these runs, we set $r+q$ to be 1\% of the number of vertices in the graph.
Note that the sample size of \mainalg{} in the SM is exactly $r+q$. 
For the HDM, we show results in \Fig{runs-hdm}. 
Again, we set $r+q$ to be 1\%, though the number of edges sampled (due to invocations of \degest$(v)$) varies quite a bit.
The required number of samples are provided in \Tab{GraphStats}. Note that the number of edges sampled
is well within 10\% of the total, except for the {\tt com-youtube} graph.

%

Visually, we can see that the estimates are accurate for all degrees, in all graphs, for both models.
This is despite there being sufficient irregular behavior in $N(d)$. Note that the shape of the various ccdhs are different and none of them form an obvious straight line.
Nonetheless, \mainalg{} captures the distribution almost perfectly in all cases by observing 1\% of the vertices.


%
\begin{figure*}[ht!]
\centering
    \subfigure[{\tt as-skitter}]{ 
    \includegraphics[width=0.2\textwidth]{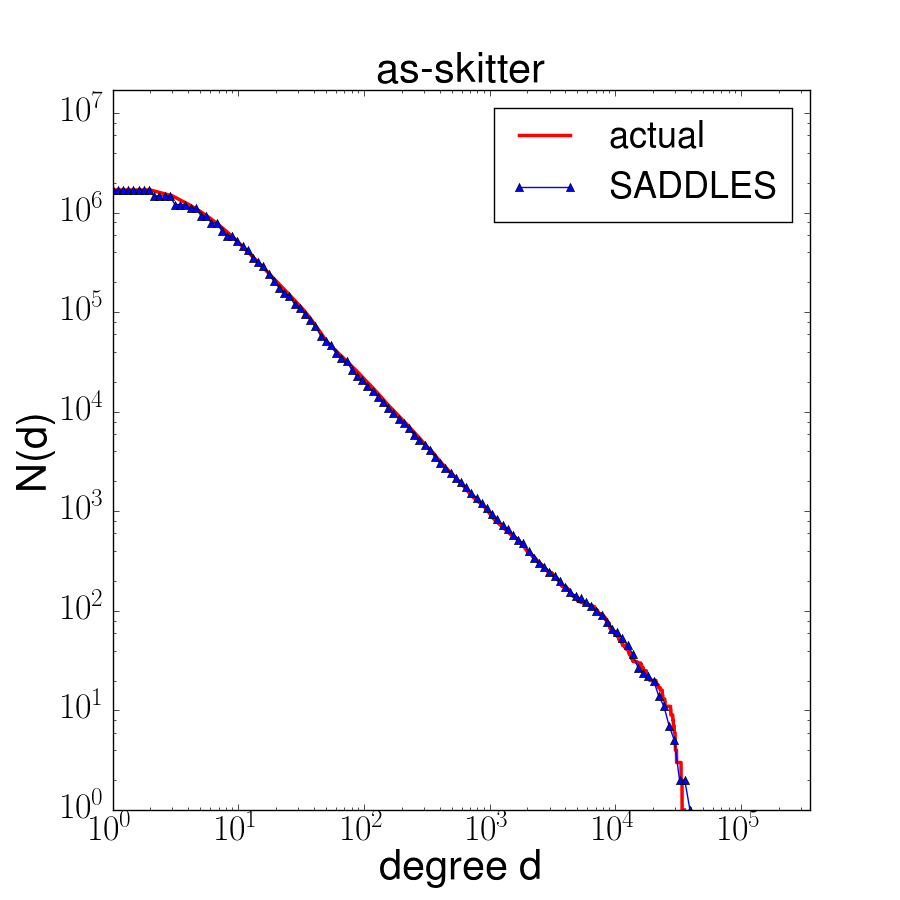} \label{fig:skitter-hdm}}
    \subfigure[{\tt loc-gowalla}]{
    \includegraphics[width=0.2\textwidth]{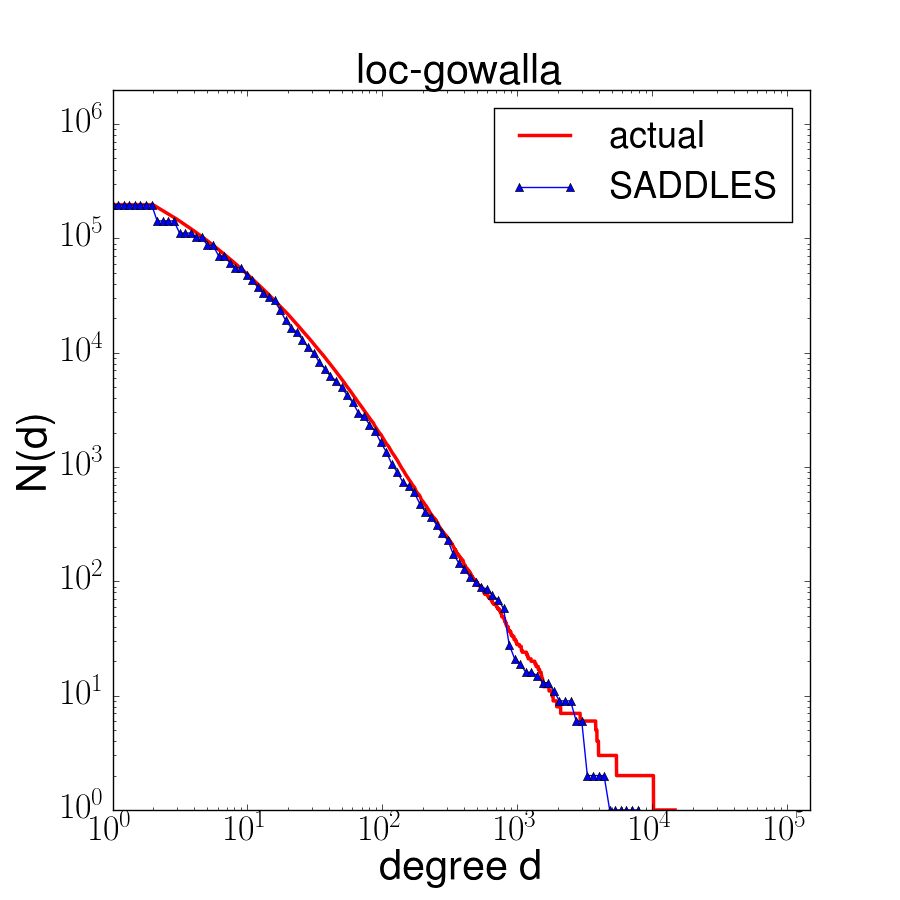} \label{fig:gowalla-hdm}}
    \subfigure[{\tt web-Google}]{
    \includegraphics[width=0.2\textwidth]{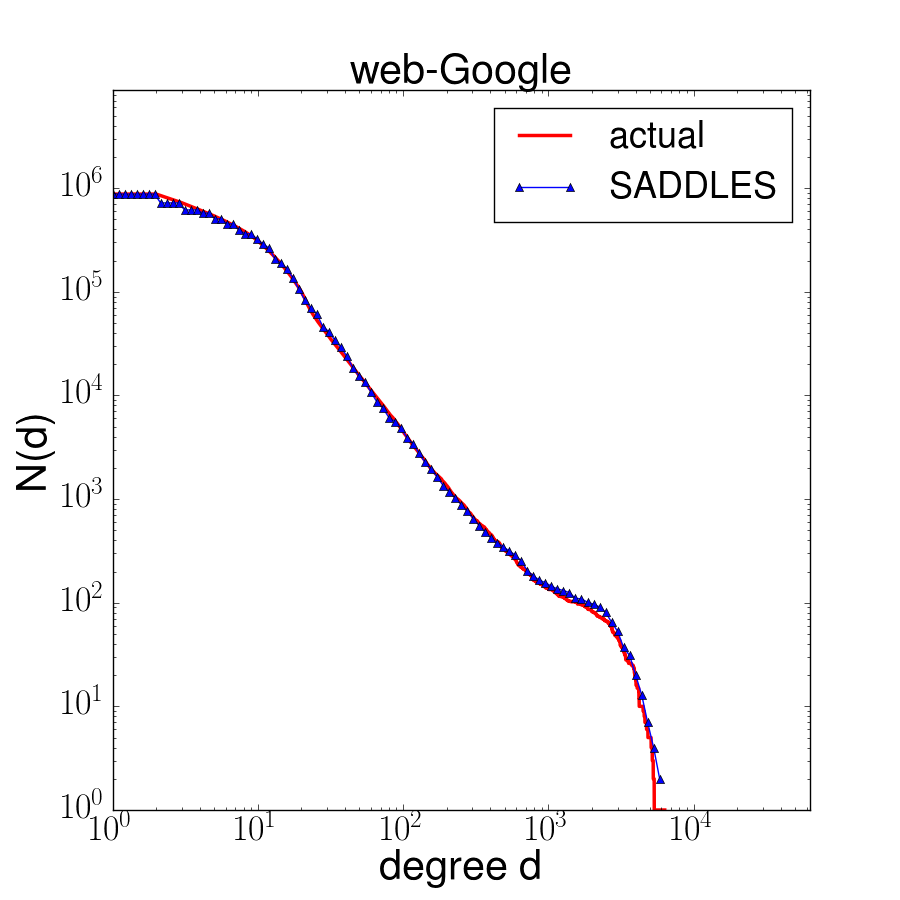} \label{fig:google-hdm}}
    \subfigure[{\tt wiki-Talk}]{
    \includegraphics[width=0.2\textwidth]{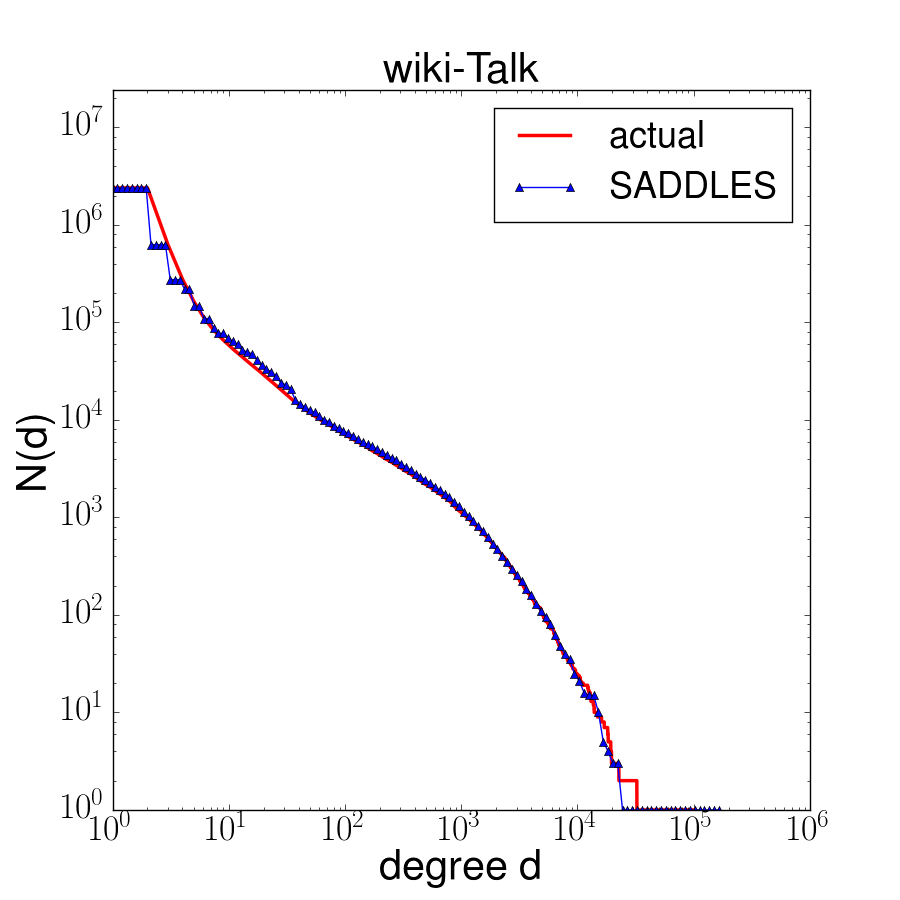} \label{fig:wiki-hdm}}
    \vspace{-10pt}
\caption{The result of runs of \mainalg{} on a variety of graphs, for the HDM. We set $r+q$ to be 1\% of the number
of vertices, for all graphs. The actual number of edges sampled varies, and is given in \Tab{GraphStats}.}
\label{fig:runs-hdm}        
\end{figure*}

{\bf Convergence:} To demonstrate convergence, we fix the graph {\tt com-orkut}, and run \mainalg{}
only for the degrees $10, 100$, and  $1000$. For each choice of degree, we vary the total number of samples $r+q$. (We set $r=q$ in all runs.) Finally, for each
setting of $r+q$, we perform 100 independent runs of \mainalg.

For each such run, we compute an error parameter $\alpha$. Suppose the output of a run is $M$,
for degree $d$. The value of $\alpha$ is the smallest value of $\epsilon$,
such that $M \in [(1-\epsilon)N((1+\epsilon)d), (1+\epsilon)N((1-\epsilon)d)]$. (It is the smallest
$\epsilon$ such that $M$ is an $(\epsilon,\epsilon)$-approximation of $N(d)$.)

\Fig{conv} shows the spread of $\alpha$, for the $100$ runs, for each choice of $r+q$.
Observe how the spread decreases as $r+q$ goes to 10\%. In all cases, the values of $\alpha$
decay to less than $0.05$. We notice that convergence is much faster for $d=10$. This is because
$N(10)$ is quite large, and \mainalg{} is using vertex sampling to estimate the value. 

\begin{figure*}[ht]
\centering
    \subfigure[{$d=10$}]{\includegraphics[width=0.2\textwidth]{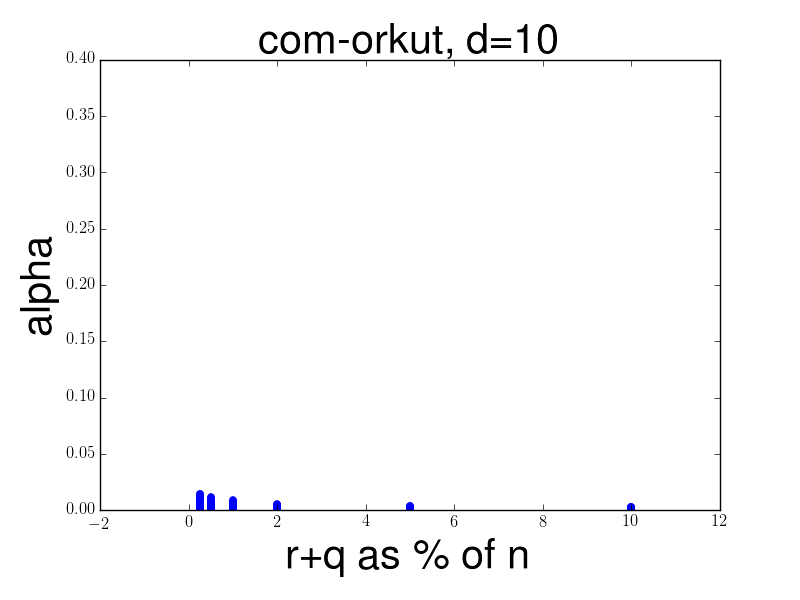} \label{fig:conv-10}}
    \subfigure[{$d=100$}]{\includegraphics[width=0.2\textwidth]{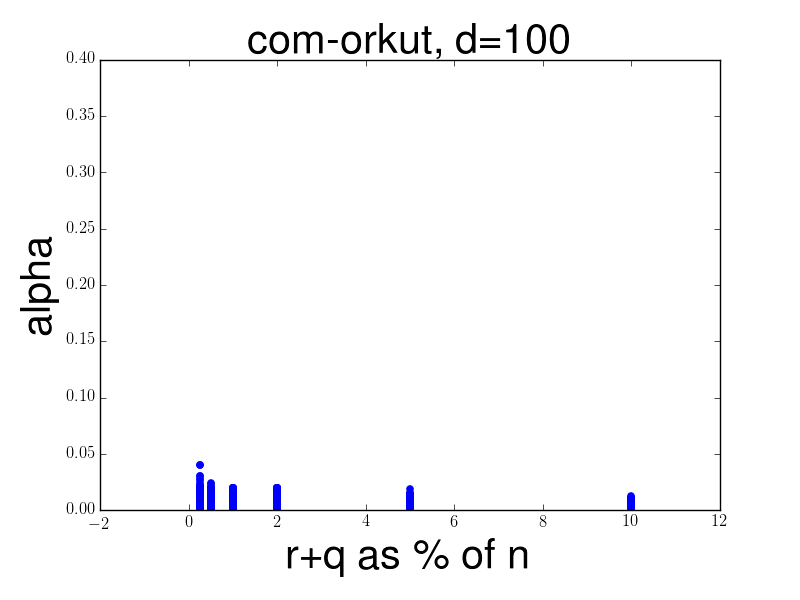} \label{fig:conv-100}}
    \subfigure[{$d=1000$}]{
    \includegraphics[width=0.2\textwidth]{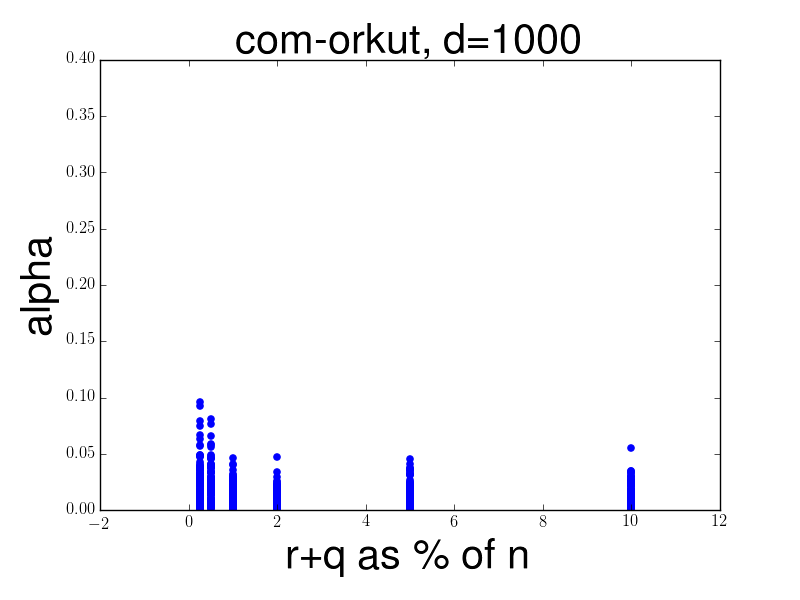} \label{fig:conv-1000}}
    \subfigure[{$d=10000$}]{
    \includegraphics[width=0.2\textwidth]{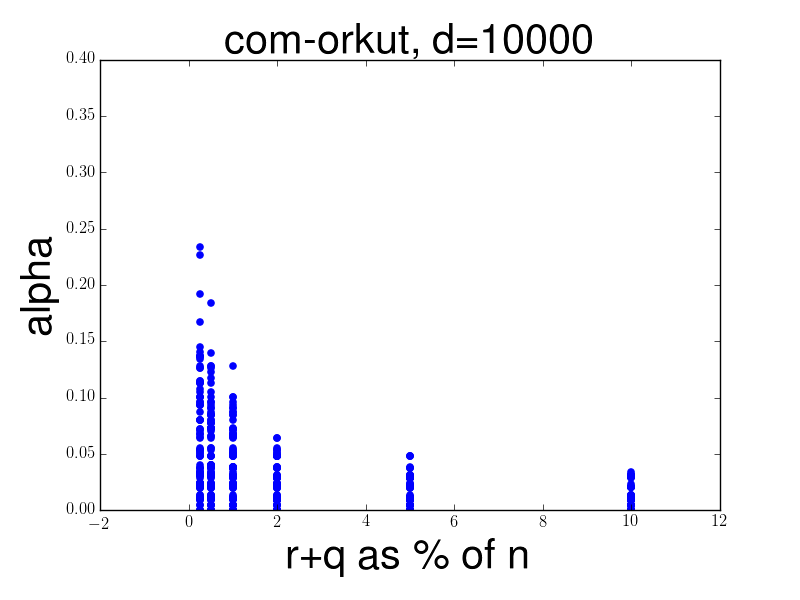} \label{fig:conv-10000}}
    \vspace{-10pt}
\caption{Convergence of \mainalg: We plot the values of the error parameter $\alpha$ (as defined in \Sec{eval}) for 100 runs at increasing values of $r+q$.
We have a different plot for $d = 10, 100, 1000, 10000$ to show the convergence at varying portions of the ccdh.}
\label{fig:conv}        
\end{figure*}

{\bf Large value of $h$ and $z$-index on real graphs:} The $h$ and $z$-index of all graphs is given in \Tab{GraphStats}. Observe how they
are typically in the hundreds. Note that the average degree is typically an order of magnitude smaller than these indices.
Thus, a sample size of $n/h + m/z^2$ (as given by \Thm{main}, ignoring constants) is significantly sublinear.
This is consistent with our choice of $r+q = n/100$ leading to accurate estimates for the ccdh.

\subsection{Comparison with previous work}

There are several graph sampling algorithms that have been discussed in ~\cite{RT12,LKJ06,EHR08,LF06,ANK10,PSE15,ZKS15}. In all of these methods we collect the vertices and scale their counts appropriately to get the estimated ccdh.
We describe these methods below in more detail, and discuss our implementation of the method.

\begin{asparaitem}
\item Vertex Sampling (VS, also called egocentric sampling) ~\cite{RT12,LKJ06,EHR08,LF06,PSE15,ANK14}: In this algorithm, we  sample vertices u.a.r. and scale the ccdh obtained appropriately, to get an estimate for the ccdh of the entire graph. 
\item Edge Sampling (ES) ~\cite{RT12,LKJ06,EHR08,LF06,PSE15,ANK14}: This algorithm samples edges u.a.r. and includes one or both end points in the sampled network. Note that
this does \emph{not} fall into the SM. In our implementation we pick a random end point. 
\item Random walk with jump (RWJ) ~\cite{RT12,EHR08,LF06,PSE15,ANK14}: We start a random walk at a vertex selected u.a.r. and collect all vertices encountered on the path in our sampled network. 
At any point, with a constant probability ($0.15$, based on previous results) we jump to another u.a.r. vertex. 
\item One Wave Snowball (OWS) ~\cite{LKJ06,EHR08,ANK14}:  Snowball sampling starts with some vertices selected u.a.r. and crawls the network until a network of the desired size is sampled. 
In our implementation, we usually stop at the first level since that accumulates enough vertices.
\item Forest fire (FF)~\cite{EHR08,LF06,ANK14}: This method generates random sub-crawls of the network.
A vertex is picked u.a.r. and randomly selects a subset of its neighbors
(according to a geometric distribution).
The process is repeated from every selected vertex until it ends. It is then repeated from another u.a.r. vertex.
\end{asparaitem}

We run all these algorithms on the {\tt amazon0601}, {\tt web-Google},
{\tt cit-Patents}, and {\tt com-orkut} networks.
To make fair comparisons, we run each method until it selects 1\% of the vertices.
The comparisons are shown in \Fig{intro}. Observe how none of the methods come close to accurately measuring
the ccdh. (This is consistent with previous work, where typically 10-20\% of the vertices are sampled
for results.) Naive vertex sampling is accurate at the head of the distribution, but completely misses the tail.
Except for vertex sampling, all other algorithms are biased towards the tail. Crawls find high degree vertices
with disproportionately higher probability, and overestimate the tail.

Note that our implementations of FF, OWS, RWJ assume access to u.a.r. vertices. Variants of these algorithms can be used in situations where we only have access to seed vertices, however, one would typically have to sample many more edges to deal with larger correlation among the vertices obtained through the random walks. Despite this extra capability to sample u.a.r. vertices in our implementation of these algorithms, they show significant errors, particularly in the tail of the distribution.

{\bf Inverse method of Zhang et al~\cite{ZKS15}:} An important result of estimating degree distributions is that of Zhang et al~\cite{ZKS15}, that explicitly points
out the bias problems in various sampling methods. They propose
a bias correction method by solving a constrained, penalized weighted least-squares problem
on the sampled degree distribution.
We apply this method for the sampling methods demonstrated in their paper, namely VS, OWS,  and IN (sample vertices u.a.r. and only retain edges between sampled vertices). 
We show results in \Fig{intro}, again with a sample size of 1\% of the vertices. Observe that 
no method gets even close to estimating the ccdh accurately, even after debiasing. Fundamentally, these methods require
significantly more samples to generate accurate estimates.

The running time and memory requirements of this method grow superlinearly with the
maximum degree in the graph.  
The largest graph processed by~\cite{ZKS15} has a few hundred thousand edges, which is on the smaller side of graphs in \Tab{GraphStats}. \mainalg{} processes a graph with more than 100M edges in less than a minute, while our attempts to run the~\cite{ZKS15} algorithm on this graph did not terminate in hours.

\vspace{-10pt}
\section{Acknowledgements} \label{sec:ack}

Ali Pinar's work is supported by the Laboratory Directed Research and Development program at Sandia National Laboratories.
Sandia National Laboratories is a multimission laboratory managed and
operated by National Technology and Engineering Solutions of Sandia,
LLC., a wholly owned subsidiary of Honeywell International, Inc., for
the U.S. Department of Energy's National Nuclear Security Administration
under contract DE-NA-0003525.

Both Shweta Jain and C. Seshadhri are grateful to the support of the Sandia National Laboratories LDRD program for funding this research.
C. Seshadhri also acknowledges the support of NSF TRIPODS grant, CCF-1740850.

This research was partially supported by the Israel Science Foundation grant No.~671/13 and by a grant from the Blavatnik fund. 
Talya Eden is grateful to the Azrieli Foundation for the award of an Azrieli Fellowship.

Both Talya Eden and C. Seshadhri are grateful to the support of the Simons Institute, where this work was initiated during
the Algorithms and Uncertainty Semester.

\bibliographystyle{ACM-Reference-Format}
\bibliography{dd_kdd}

\newpage

%
\end{document}